\documentclass[11pt,a4paper,twoside]{article}

\usepackage[latin1]{inputenc}
\usepackage[english]{babel}
\usepackage{amsmath,amssymb, enumerate,array,amsthm,graphicx,color}
\setlength{\voffset}{-1 cm}
\setlength{\headsep}{1.5 cm}
\setlength{\oddsidemargin}{0.5 cm}
\setlength{\evensidemargin}{0.5 cm}
\setlength{\textheight}{210 mm}
\setlength{\textwidth}{160 mm}
%
%


\newcommand{\MC}[0]{\ensuremath{\textsc{MC}}}
\newcommand{\CSP}[0]{\ensuremath{\textsc{CSP}}}
\newcommand{\QCSP}[0]{\ensuremath{\textsc{QCSP}}}
\newcommand{\NP}[0]{\ensuremath{\mathsf{NP}}}
\newcommand{\coNP}[0]{\ensuremath{\mathsf{co\mbox{-}NP}}}

\newcommand{\Logspace}[0]{\ensuremath{\mathsf{Logspace}}}

\newcommand{\Ptime}[0]{\ensuremath{\mathsf{P}}}
\newcommand{\coPspace}[0]{\ensuremath{\mathsf{co\mbox{-}Pspace}}}
\newcommand{\Pspace}[0]{\ensuremath{\mathsf{Pspace}}}

\newcommand{\symclos}[0]{\mbox{sym-clos}}
\newcommand{\tranclos}[0]{\mbox{tran-clos}}
\newcommand{\doub}[0]{\mbox{doub}}

\newcommand{\FO}[0]{\ensuremath{\mathsf{FO}}}
\newcommand{\tuple}[1]{\ensuremath{\mathbf{#1}}}

\newcommand{\notmodels}{\ \makebox[0.1cm][l]{\ensuremath{\models}}/ \ }

\theoremstyle{plain}
\newtheorem{theorem}{Theorem}

\newtheorem{lemma}[theorem]{Lemma}
\newtheorem{proposition}[theorem]{Proposition}

\theoremstyle{definition}

\theoremstyle{remark}

\newtheorem*{examples}{Examples}



\begin{document}

\title{ \Large \bf Model Checking Positive Equality-free FO: \\ Boolean Structures and Digraphs of Size Three}
\author{ Barnaby Martin \\ {\small Department of Computer Science, University of Durham,} \\ {\small Science Labs, South Road, Durham DH1 3LE, U.K.} \\ {\tt b.d.martin@durham.ac.uk} }
\date{}

\maketitle

\begin{abstract}
We study the model checking problem, for fixed structures $A$, over positive equality-free first-order logic -- a natural generalisation of the non-uniform quantified constraint satisfaction problem $\QCSP(A)$. We prove a complete complexity classification for this problem when $A$ ranges over 1.) boolean structures and 2.) digraphs of size (less than or equal to) three. The former class displays dichotomy between \Logspace\ and \Pspace-complete, while the latter class displays tetrachotomy between \Logspace, \NP-complete, \coNP-complete and \Pspace-complete.
\end{abstract}

\section{Introduction}
The \emph{model checking problem} over a logic $\mathcal{L}$ -- here always a fragment of first-order logic (\FO) -- takes as input a structure (model) $A$ and a sentence $\varphi$ of $\mathcal{L}$, and asks whether $A \models \varphi$. When $\mathcal{L}$ is the \emph{existential conjunctive positive} fragment of \FO, $\{ \exists, \wedge \}$-\FO, the model checking problem is equivalent to the much-studied \emph{constraint satisfaction problem} (\CSP). Similarly, when $\mathcal{L}$ is the \emph{(quantified) conjunctive positive} fragment of \FO, $\{ \exists, \forall, \wedge \}$-\FO, the model checking problem is equivalent to the well-studied \emph{quantified constraint satisfaction problem} (\QCSP).
In this manner, the \QCSP\ is the generalisation of the \CSP\ in which universal quantification is restored to the mix. In both cases it is essentially irrelevant whether or not equality is permitted in the sentences, as it may be propagated out by substitution. Much work has been done on the parameterisation of these problems by the structure $A$ -- that is, where $A$ is fixed and only the sentence is input. It is conjectured \cite{FederVardi} that the ensuing problems $\CSP(A)$ attain only the complexities \Ptime\ and \NP-complete. This may appear surprising given that 1.) so many natural \NP\ problems may be expressed as \CSP s (see, e.g., myriad examples in \cite{jeavons98algebraic}) and 2.) \NP\ itself does not have this `dichotomy' property (assuming $\Ptime \neq \NP$) \cite{Ladner}. While this \emph{dichotomy conjecture} remains open, it has been proved for certain classes of $A$ (e.g., for structures of size at most three \cite{BulatovJACM} and for undirected graphs \cite{HellNesetril}) 
The like parameterisation of the \QCSP\ is also well-studied, and while no overarching polychotomy has been conjectured, only the complexities \Ptime, \NP-complete and \Pspace-complete are known to be attainable (for trichotomy results on certain classes see \cite{OxfordQuantifiedConstraints,DBLP:conf/cie/MartinM06}, as well as the dichotomy for boolean structures, e.g., in \cite{Nadia}).

In previous work, \cite{CiE2008}, we have studied the model checking problem, parameterised by the structure, for various fragments of \FO. Various complexity classifications are obtained and the case is put that the only interesting fragment, other than those that give rise to the \CSP\ and the \QCSP\ is \emph{positive equality-free} \FO, $\{ \exists, \forall, \wedge, \vee \}$-\FO\ (the classification for the remaining fragments in near-trivial). This model checking problem may be seen as the generalisation of the \QCSP\ in which disjunction is returned to the mix -- although note that the absence of equality is here important. 

In \cite{CiE2008}, some general hardness results are given for the model checking problem, parameterised by the structure $A$, over positive equality-free \FO, which we denote $\{ \exists, \forall, \wedge, \vee \}$-$\MC(A)$. In the case where $A$ ranges over boolean digraphs, a full classification -- a dichotomy -- is given.
In this paper, we extend this result in two directions. Firstly, in Section~\ref{sec:booleanstart}, we prove, for boolean structures $B$, that $\{ \exists, \forall, \wedge, \vee \}$-$\MC(B)$ is either in \Logspace\ or is \Pspace-complete. A similar result, but with different classification criteria is known for $\QCSP(B)$, i.e. $\{ \exists, \forall, \wedge \}$-$\MC(B)$ (see, e.g., \cite{Nadia}). Secondly, in Section~\ref{sec:digraphstart}, we prove, for digraphs $H$ of size (less than or equal to) three, that $\{ \exists, \forall, \wedge, \vee \}$-$\MC(B)$ is either in \Logspace, is \NP-complete, is \coNP-complete or is \Pspace-complete. While the classification criterion for the boolean case is fairly simple, the criteria for digraphs of size three are far from obvious (our result is achieved through a series of ad hoc methods). This suggests that, from the viewpoint of complexity theory, the class of model checking problems over $\{ \exists, \forall, \wedge, \vee \}$-\FO\ is pleasingly rich.

\section{Preliminaries}

Let $A$ be a $\sigma$-structure, for some relational signature $\sigma:=\langle R_1,\ldots,R_m\rangle$, over universe $|A|$ of cardinality $||A||$. Let $\{ \exists, \forall, \wedge, \vee \}$-\FO\ be the positive equality-free fragment of first-order logic (\FO). Define the problem $\{ \exists, \forall, \wedge, \vee \}$-$\MC(A)$ to have as input a sentence $\varphi$ of $\{ \exists, \forall, \wedge, \vee \}$-\FO, and to have as yes-instances those sentences such that $A \models \varphi$.
While $\{ \exists, \forall, \wedge, \vee \}$-\FO\ is the principal fragment involved in this paper, we will sometimes have recourse to the equality-free fragments $\{ \neg, \exists, \forall, \wedge, \vee \}$-\FO, $\{ \exists, \forall, \wedge \}$-\FO\ and $\{ \exists, \wedge, \vee \}$-\FO,\footnote{We imagine the definitions of these fragments to be clear from their notation: for example, $\{ \exists, \wedge, \vee \}$-\FO\ is the fragment of \FO\ involving no instances of negation, universal quantification or equality.} together with their respective model checking problems.

We assume that all sentences $\varphi$ of $\{ \exists, \forall, \wedge, \vee \}$-\FO\ are in prenex form, since they may be thus translated in logarithmic space. We note that $\{ \exists, \forall, \wedge, \vee \}$-$\MC(A)$, which contains $\QCSP(A)$, is always in \Pspace, by an inward evaluation procedure of the quantified variables (see \cite{VardiComplexity}). Similarly, $\{ \exists, \wedge, \vee \}$-$\MC(A)$, which contains $\CSP(A)$, is always in \NP. Henceforth all proofs of \Pspace-completeness for $\{ \exists, \forall, \wedge, \vee \}$-$\MC(A)$ will include only proof of hardness. The reductions used will involve only straightforward substitutions, and will always be logspace many-to-one.

\section{Boolean Structures}
\label{sec:booleanstart}

Let $B$ be a boolean structure, that is $||B||=2$, where we consider $|B|$ normalised as $\{0,1\}$.  
Relations of $B$ that contain no tuples (respectively, all tuples) are of little interest from the viewpoint of complexity theory, since they may be substituted in instances $\varphi$ of $\{ \exists, \forall, \wedge, \vee \}$-$\MC(B)$ by the boolean false (respectively, true) without affecting whether $B \models \varphi$. Such substitutions may be carried out in logarithmic space, and we note here that, if each relation of $B$ is either empty or contains all tuples, then $\{ \exists, \forall, \wedge, \vee \}$-$\MC(B)$ is in \Logspace, since evaluation is equivalent to the Boolean Sentence Value Problem \cite{NLynch} (note that this would be the case were $||B||$ to be $1$). Henceforth, in this section, we work under the assumption 
\begin{itemize}
\item[$( \dag )$] that $B$ does not contain any relations that are either empty or contain all tuples.
\end{itemize}

Define the \emph{canonical relation} $R_B$ (note the subscript) to be $R^B_1 \times \ldots \times R^B_m$ (where $R^B_i$ is the interpretation of $R_i$ in $B$). The arity of $R_B$ is the sum of the arities of $R^B_1,\ldots, R^B_m$. $R_B$ effectively encodes all the relations of $B$; note that the stipulation that $B$ contains no empty relations is essential to its definition.

In a boolean structure $B$, whose canonical relation of arity $r$ is $R_B(v_1,\ldots,v_r)$, $0$ is termed a \emph{$\forall$-canon} and $1$ a \emph{$\exists$-canon} if, for all partitions $I | J$ of $\{v_1,\ldots,v_r\}$,
\[ B \ \models \ {R_B}(I/0,J) \rightarrow {R_B}(I/1,J), \]
where ${R_B}(I/0,J)$ and ${R_B}(I/1,J)$ are $R_B$ with the variables of $I$ substituted by $0$ and $1$, respectively (note that ${R_B}(I/0,J)$ and ${R_B}(I/1,J)$ contain $|J|$ free variables, and the statement should hold for all their instantiations). In the parlance of, e.g., \cite{LaroseLotenTardif}, $0$ and $1$ being $\forall$-canon and $\exists$-canon, respectively, is equivalent to $1$ \emph{dominating} $0$.
We may term $1$ a $\forall$-canon and $0$ a $\exists$-canon in the obvious symmetric manner. The following is the principle result of this section.
\begin{theorem}[Dichotomy]
Let $B$ be a boolean structure. If $B$ contains a $\forall$-canon (and a $\exists$-canon) then $\{ \exists, \forall, \wedge, \vee \}$-$\MC(B)$ is in \Logspace, otherwise it is \Pspace-complete.
\end{theorem}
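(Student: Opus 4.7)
The plan for the tractable case (say, without loss of generality, that $0$ is a $\forall$-canon and $1$ an $\exists$-canon) is a clean inductive collapse. By induction on the quantifier prefix of $\varphi = Q_1 x_1 \cdots Q_n x_n \, \psi$ I would prove that $B \models \varphi$ iff the variable-free propositional formula $\psi^* := \psi[\forall\textrm{-vars} \mapsto 0, \exists\textrm{-vars} \mapsto 1]$ evaluates to true under the relations of $B$. Both directions of the inductive step exploit the monotonicity encoded in the canon hypothesis: every atom in $\psi$ is a conjunct determined by $R_B$, and since $R_B$ is closed under flipping $0$s to $1$s, replacing any $0$ by $1$ in a satisfied atom keeps it satisfied, and by positivity this propagates to $\psi$. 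So $\exists$ may always play $1$, and $\forall$'s worst play is $0$. What remains is the evaluation of $\psi^*$, which is an instance of the Boolean Sentence Value Problem, in \Logspace\ by \cite{NLynch}.

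For the hardness case, assume neither $0$ nor $1$ is a $\forall$-canon. Unpacking the definition, there exist witnesses $(\vec{a}, \vec{b})$ with $\vec{a} \in R_B$, $\vec{b} \notin R_B$, $\vec{a} \le \vec{b}$ pointwise (the failure of the $0 \to 1$ implication), and symmetrically $(\vec{a}', \vec{b}')$ with $\vec{a}' \in R_B$, $\vec{b}' \notin R_B$, $\vec{a}' \ge \vec{b}'$ pointwise. The plan is a logspace many-to-one reduction from a known \Pspace-complete $\{\exists, \forall, \wedge, \vee\}$-$\MC$ problem, naturally that of the two-element digraph $N$ with edge relation $\{(0,1),(1,0)\}$, whose \Pspace-completeness follows from the boolean-digraph classification of \cite{CiE2008} (note that $N$ has neither $\forall$- nor $\exists$-canon, as can be checked directly). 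Each variable of an $N$-instance is replaced by a gadget in the new instance that uses the two witness tuples to force a pair of designated coordinates to take complementary values in $\{0,1\}$, thereby simulating the edge relation of $N$.

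The principal obstacle is precisely this hardness reduction: because $\{\exists, \forall, \wedge, \vee\}$-\FO\ carries neither negation nor equality, the ``complementarity'' of $0$ and $1$ must be extracted purely from the non-monotonicity witnesses by quantifier alternation and positive combinations. The cleanest route, which I would attempt first, is to show that the two witnesses suffice to $\{\exists, \forall, \wedge, \vee\}$-define (without parameters) a relation which, restricted to two distinguished coordinates, agrees with $\{(0,1),(1,0)\}$; this immediately furnishes the required gadget. Should a uniform such definition prove awkward, one would fall back on a case analysis by the number and position of coordinates at which each witness pair differs, handling each pattern with an ad hoc gadget built from the same witnesses; since only finitely many patterns arise (bounded by the arity of $R_B$), this still yields a logspace reduction and completes the dichotomy.
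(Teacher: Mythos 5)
Your tractable case is essentially the paper's argument verbatim (collapse $\forall$ to the $\forall$-canon and $\exists$ to the $\exists$-canon, then invoke the Boolean Sentence Value Problem), so no issue there. The gap is in the hardness direction, which you leave as a plan with two candidate routes rather than a construction. Your fallback route does not actually terminate in a finite case analysis: the ``patterns'' you propose to enumerate are positions of coordinates in tuples whose length is the arity of $R_B$, and since the theorem quantifies over all boolean structures $B$ this arity is unbounded. The step that makes the analysis finite --- and which is missing from your proposal --- is to \emph{identify} the coordinates of a witness tuple according to their values, collapsing $R_B$ to a definable relation of arity $2$ or $3$, and then to case-split on the two bits $R_B(0,\ldots,0)$ and $R_B(1,\ldots,1)$; this is the organizing principle of the paper's three hardness lemmas.

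Moreover, your ``cleanest route'' of always manufacturing $\{(0,1),(1,0)\}$ (i.e.\ $K_2$) on two distinguished coordinates is not what the witnesses naturally deliver. When both $(0,\ldots,0)$ and $(1,\ldots,1)$ lie in $R_B$, identifying coordinates along a \emph{missing} tuple and conjoining with the transposed copy yields $\overline{K}_2=\{(0,0),(1,1)\}$, not $K_2$; and in the mixed case ($R_B(1,\ldots,1)$ but not $R_B(0,\ldots,0)$) one must first pin a coordinate to $1$ via the unary relation $R_B(v,\ldots,v)$ before $\overline{K}_2$ can be extracted. Hardness for $\overline{K}_2$ then needs its own argument: the paper obtains it by De Morgan duality with $K_2$ together with $\Pspace=\coPspace$. (One can alternatively observe that $\forall w\,(E(w,u)\vee E(w,v))$ positively defines $K_2$ inside $\overline{K}_2$, so a $K_2$-only strategy is salvageable --- but that universally quantified gadget is precisely the idea your proposal does not supply.) Without these steps the reduction is unspecified, so the hardness half of the dichotomy is not proved.
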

\noindent The theorem follows from Propositions~\ref{prop:easy} and \ref{prop:hard} below.
\begin{examples}
Let $B_1$ and $B_2$ be the boolean structures involving the single ternary relations $\{(0,0,0),$ $(0,0,1)\}$ and $\{(0,0,0),(0,1,1)\}$, respectively. It may be verified from our classification that $\{ \exists, \forall, \wedge, \vee \}$-$\MC(B_1)$ is in \Logspace, while $\{ \exists, \forall, \wedge, \vee \}$-$\MC(B_2)$ is \Pspace-complete.
\end{examples}
\begin{proposition}
\label{prop:easy}
Let $B$ be a boolean structure. If $B$ contains a $\forall$-canon (and a $\exists$-canon) then $\{ \exists, \forall, \wedge, \vee \}$-$\MC(B)$ is in \Logspace.
\end{proposition}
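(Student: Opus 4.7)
The plan is to exploit the "1 dominates 0" property to collapse all quantifiers to a single canonical witness, reducing the problem to the Boolean Sentence Value Problem, which is known to lie in \Logspace\ (indeed in \ALogtime) by \cite{NLynch}.

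First, I would prove by structural induction on $\{\exists,\forall,\wedge,\vee\}$-\FO\ formulas the following \emph{monotonicity lemma}: if $0$ is a $\forall$-canon and $1$ a $\exists$-canon, then for every formula $\varphi(x_1,\ldots,x_n)$ and every pair of assignments $(a_1,\ldots,a_n), (b_1,\ldots,b_n) \in \{0,1\}^n$ with $a_i \le b_i$ componentwise, $B \models \varphi(a_1,\ldots,a_n)$ implies $B \models \varphi(b_1,\ldots,b_n)$. The atomic case is precisely the defining property of $0$ being a $\forall$-canon, applied relation by relation through the canonical relation $R_B$ (every atom $R_i^B(\ldots)$ is a projection of $R_B$). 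The $\wedge$ and $\vee$ cases are immediate; for $\exists$ and $\forall$ we simply instantiate the witness, respectively range over, values in $\{0,1\}$, using the inductive hypothesis on the body.

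Second, I would apply this monotonicity lemma to the quantifier prefix: if $\varphi = Q_1 x_1 \ldots Q_k x_k \, \psi$ is in prenex form with $\psi$ quantifier-free, then $B \models \varphi$ if and only if $B \models \psi[x_i / c_i]$, where $c_i = 0$ when $Q_i = \forall$ and $c_i = 1$ when $Q_i = \exists$. The forward direction is trivial: $\forall x_i$ forces us to consider $x_i=0$, and $\exists x_i$ is witnessed by $x_i=1$. The reverse direction is proved by peeling off quantifiers from the outside: for a $\forall$-block we use that $\psi$ is monotone in $x_i$, so satisfaction at $x_i=0$ entails satisfaction at $x_i=1$; for an $\exists$-block we simply use $x_i=1$ as the witness. (One must be mildly careful to do the induction in a way that respects the interleaving of $\exists$ and $\forall$, but monotonicity is preserved at every step, since substituting a constant into a monotone formula yields a monotone formula.)

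Finally, after the substitution, the resulting sentence is a ground positive Boolean combination of atoms of the form $R_i^B(c_{j_1},\ldots,c_{j_{\mathrm{ar}}})$, each of which has a fixed truth value in $B$ that can be computed in constant time (since $B$ is fixed). The overall formula is then an instance of the Boolean Sentence Value Problem and can be evaluated in \Logspace. The whole computation -- parsing the prenex prefix, recording the polarity of each quantifier, performing the substitution on the fly, and piping the result into a logspace Boolean-evaluation routine -- is clearly carried out in logarithmic space. The one mildly delicate point is the structural-induction proof of monotonicity in the presence of alternating quantifiers, but since positivity and equality-freeness of the formula are preserved under substitution, the induction goes through uniformly.
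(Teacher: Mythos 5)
Your proposal is correct and follows essentially the same route as the paper: both collapse every universal variable to the $\forall$-canon $0$ and every existential variable to the $\exists$-canon $1$, and then evaluate the resulting ground sentence as an instance of the Boolean Sentence Value Problem in \Logspace. The only difference is presentational --- you make explicit, as a structural-induction monotonicity lemma, the domination argument that the paper states informally (``any existential witnesses to $0$ are also witnesses to $1$'').
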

\begin{proof}
In $B$, and w.l.o.g., assume that $0$ is a $\forall$-canon and $1$ is a $\exists$-canon. For $\varphi$ in  $\{ \exists, \forall, \wedge, \vee \}$-\FO, we claim that $B \models \varphi$ iff $B \models \varphi_{[\forall/0,\exists/1]}$ where $\varphi_{[\forall/0,\exists/1]}$ is the quantifier-free sentence obtained from $\varphi$ by instantiating all universal variables as $0$ and all existential variables as $1$. The evaluation of $\varphi_{[\forall/0,\exists/1]}$ on $B$ is equivalent to the Boolean Sentence Value Problem, known to be in \Logspace\ \cite{NLynch}. Our claim follows straight from the definition together with the positivity of $\varphi$. Let us consider this briefly. We may assume that all universal variables of $\varphi$, in turn, are set to $0$, since any existential witnesses to $0$ are also witnesses to $1$. Thereafter, we may assume that all remaining (existential) variables are set to $1$, because $1$ acts as a witness to everything that $0$ does.
\end{proof}

\begin{proposition}
\label{prop:hard}
Let $B$ be a boolean structure. If $B$ does not contain a $\forall$-canon, then $\{ \exists, \forall, \wedge, \vee \}$-$\MC(B)$ is \Pspace-complete.
\end{proposition}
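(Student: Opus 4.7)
The plan is to establish \Pspace-hardness by a logspace many-to-one reduction from a known \Pspace-complete problem -- either from quantified satisfiability (QBF) directly, or, in the spirit of the earlier boolean-digraph classification in \cite{CiE2008}, from $\{ \exists, \forall, \wedge, \vee \}$-$\MC(B')$ for a fixed ``hard'' boolean structure $B'$. The assumption that neither $0$ nor $1$ is a $\forall$-canon would be unpacked at the outset into concrete combinatorial witnesses: partitions $I_0 | J_0$ and $I_1 | J_1$ of the coordinates of $R_B$ and instantiations $\beta_0$ of $J_0$ and $\beta_1$ of $J_1$ such that $R_B(I_0/0, J_0/\beta_0)$ holds while $R_B(I_0/1, J_0/\beta_0)$ fails, and symmetrically $R_B(I_1/1, J_1/\beta_1)$ holds while $R_B(I_1/0, J_1/\beta_1)$ fails. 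Together with assumption $( \dag )$, these witnesses supply genuinely asymmetric behaviour between $0$ and $1$ inside $R_B$, and they form the raw material for the reduction.

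The next step is to build small positive ``forcing'' gadgets from these witnesses. The goal is to use existential quantification and disjunction to positively define, inside $\{ \exists, \forall, \wedge, \vee \}$-\FO\ over $B$, a low-arity relation sharp enough to encode both a ``truth-value pin'' and a negation-like link between variables. The first witness yields a formula over free variables $\bar{x}$ which accepts $\bar{x} = \bar{0}$ but rejects $\bar{x} = \bar{1}$, and the second witness gives a symmetric formula. Suitably composed via $\vee$ (to exploit the branching that $\vee$ allows under a universally quantified controller), these one-sided gadgets should force a pair of variables to take opposite values, thereby simulating literals and clauses. With these in hand, the reduction from QBF in some convenient normal form (say quantified $3$-CNF or $3$-NAE) is straightforward: each quantifier $Q_i x_i$ becomes a $\{ \exists, \forall \}$-quantifier over $|B|=\{0,1\}$, each literal becomes an invocation of the appropriate gadget, and clauses and conjunctions become $\vee$ and $\wedge$; correctness follows gadget-by-gadget and the map is clearly logspace-computable.

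The main obstacle, I expect, is the gadget construction of the second step: because we work in a strictly positive, equality-free, constant-free fragment, ``negation'' has to be simulated purely through the asymmetric acceptance pattern of $R_B$, and care is required to ensure that the composed gadget does not inadvertently accept tuples it should reject -- in particular because the $J$-positions of the witnesses cannot be pinned by explicit constants, but must instead be positively constrained using the same relation $R_B$, possibly with the help of further relations of $B$ accessed via $\wedge$. I would expect the argument may need to split into subcases depending on the shape of the witnesses (for instance, whether $I_0$ is a singleton, or whether $\beta_0$ is uniform), and in borderline cases to use auxiliary universally-quantified variables to simulate the otherwise missing constants. Should this become too delicate, the alternative route -- invoking a hardness result from \cite{CiE2008} and simulating a fixed hard template $B'$ inside $B$ -- would short-circuit the gadget step, leaving only a template-comparison argument to conclude.
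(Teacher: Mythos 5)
Your plan points at the right raw material (witnesses to the failure of $\forall$-canonicity, plus $(\dag)$) and at a plausible destination (a reduction from a quantified NAE-style problem), but it stops exactly where the real work begins: the ``forcing gadget'' you defer is the entire content of the paper's proof, and your sketch does not show it can be built. Two concrete gaps. First, you never explain how to pin the $J$-coordinates of a witness to their required instantiation $\beta_0$ in a constant-free, equality-free positive fragment. The paper resolves this by splitting into three cases according to whether the constant tuples $(0,\ldots,0)$ and $(1,\ldots,1)$ lie in the canonical relation $R_B$: when both are absent (respectively, both present), identifying the variables of a tuple $\tuple{w} \in R_B$ (respectively, $\tuple{w} \notin R_B$) along its $0$- and $1$-blocks yields a binary relation whose symmetrisation by $\vee$ (respectively, $\wedge$) is exactly $K_2$ (respectively, $\overline{K}_2$); in the remaining mixed case, the conjunct $R_B(v_{J_1},\ldots,v_{J_1})$ serves precisely as the constant-free ``pin'' you were missing, because exactly one of the two diagonal tuples is in $R_B$ -- and only in this last case is the no-$\forall$-canon hypothesis actually invoked. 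Second, your endgame reduces from quantified $3$-CNF or $3$-NAE via negation-simulating gadgets, but in two of the three cases the definable binary relation is $\overline{K}_2$, whose hardness cannot be obtained by the direct NAE substitution; the paper derives it from $\Pspace = \coPspace$ together with the de Morgan swap of $\exists/\forall$ and $\vee/\wedge$, a duality trick absent from your proposal. Without these two ingredients the proposal remains a programme rather than a proof; your fallback of citing a hard template from \cite{CiE2008} would still require exactly the definability arguments you have postponed.
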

\noindent The proof of this proposition will follow from the next three lemmas. 

\subsection{\Pspace-complete Cases}

A \emph{boolean digraph} is a boolean structure over a single binary relation $E$. Let $K_2$ and $\overline{K}_2$ be boolean digraphs with edge sets $\{(0,1),(1,0)\}$ and $\{(0,0),(1,1)\}$, respectively. The following observation will be of use to us.
\begin{lemma}
\label{lem:K2andComp}
Both $\{ \exists, \forall, \wedge, \vee \}$-$\MC(K_2)$ and $\{ \exists, \forall, \wedge, \vee \}$-$\MC(\overline{K}_2)$ are \Pspace-complete.
\end{lemma}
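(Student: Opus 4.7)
The plan is to prove \Pspace-hardness of both problems by reduction from QBF, the satisfiability problem for quantified boolean formulas in prenex 3-CNF form $\Phi = Q_1 x_1 \ldots Q_n x_n \bigwedge_j C_j$ with $C_j = \ell_{j,1} \vee \ell_{j,2} \vee \ell_{j,3}$. Membership in \Pspace\ has already been recorded in the Preliminaries, so only hardness needs attention.

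The core idea is to introduce two fresh first-order variables $c_0, c_1$ playing the role of the boolean constants. Writing $\widehat{\ell}$ for the encoding of a literal $\ell$, a positive literal $x_k$ should assert $x_k = c_1$ and a negative literal $\bar{x}_k$ should assert $x_k = c_0$. On $K_2$ the relation $E$ is disequality, so these become $\widehat{x_k} = E(x_k,c_0)$ and $\widehat{\bar{x}_k} = E(x_k,c_1)$; on $\overline{K}_2$ the relation $E$ is equality, so the two formulas get swapped. We never need to commit to which of $c_0,c_1$ is actually $0$ and which is $1$, because $K_2$ and $\overline{K}_2$ are both invariant under the swap $0 \leftrightarrow 1$ while QBF is invariant under global complementation of all bits.

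The anticipated obstacle, and what separates the two cases, is ruling out the degenerate setting $c_0 = c_1$. For $K_2$ this is easy: we prepend $\exists c_0 \exists c_1$ and conjoin $E(c_0, c_1)$, which literally says $c_0 \neq c_1$, to obtain
\[
\Psi \;=\; \exists c_0 \exists c_1 \, Q_1 x_1 \ldots Q_n x_n \bigl( E(c_0,c_1) \wedge \bigwedge_j ( \widehat{\ell_{j,1}} \vee \widehat{\ell_{j,2}} \vee \widehat{\ell_{j,3}} ) \bigr).
\]
For $\overline{K}_2$ this route is closed: the map $1 \mapsto 0$ is an endomorphism of $\overline{K}_2$, so no positive equality-free formula can enforce $c_0 \neq c_1$. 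My workaround is to instead universally quantify $\forall c_0 \forall c_1$ and to absorb the bad case by adding $E(c_0,c_1)$ as an extra disjunct in every clause:
\[
\Psi \;=\; \forall c_0 \forall c_1 \, Q_1 x_1 \ldots Q_n x_n \bigwedge_j \bigl( E(c_0,c_1) \vee \widehat{\ell_{j,1}} \vee \widehat{\ell_{j,2}} \vee \widehat{\ell_{j,3}} \bigr).
\]
When $c_0 = c_1$ the disjunct $E(c_0,c_1)$ is true under the equality interpretation and the inner matrix is satisfied trivially; when $c_0 \neq c_1$ it is false and the remaining disjunction is exactly the encoded clause. In both constructions the translation is plainly logspace, and verifying $\Phi \text{ true} \Longleftrightarrow A \models \Psi$ reduces to inspecting the two cases for $\{c_0,c_1\}$ (equal versus distinct), using the swap-invariance of QBF to handle the choice of which element is $0$ and which is $1$.
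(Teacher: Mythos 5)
Your proof is correct, but it takes a genuinely different route from the paper's. For $K_2$ the paper does not go back to QBF with explicit constants: it reduces from $\{ \exists, \forall, \wedge, \vee \}$-$\MC(B_{NAE})$ (the positive equality-free extension of quantified not-all-equal 3-SAT), exploiting that $NAE(v,v',v'')$ is positively definable over $K_2$ as $E(v,v') \vee E(v',v'') \vee E(v,v'')$ -- because $NAE$ is itself invariant under the bit swap, no constant-forcing gadget is needed at all. For $\overline{K}_2$ the paper does not construct a second reduction: it applies de~Morgan duality (swap $\exists/\forall$ and $\wedge/\vee$) to reduce from the complement of $\{ \exists, \forall, \wedge, \vee \}$-$\MC(K_2)$, invoking $\Pspace = \coPspace$; this duality is then generalised as Lemma~\ref{lem:duality} and becomes the main workhorse of the digraph section. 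Your approach is self-contained modulo the standard \Pspace-completeness of quantified 3-CNF, and the two mechanisms you use to neutralise the degenerate case $c_0=c_1$ (a conjoined $E(c_0,c_1)$ under existential quantification for $K_2$, where $E$ is disequality; a disjoined $E(c_0,c_1)$ in every clause under universal quantification for $\overline{K}_2$, where $E$ is equality) are both sound -- the only point worth tightening is that "which of $c_0,c_1$ is $0$" is handled cleanly by observing that $0 \leftrightarrow 1$ is an automorphism of both structures, rather than by an informal appeal to "swap-invariance of QBF". What you lose relative to the paper is the reusable duality lemma; what you gain is a single uniform construction that certifies both structures directly.
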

\begin{proof}
For $K_2$, we use a reduction from the problem $\{ \exists, \forall, \wedge, \vee \}$-$\MC(B_{NAE})$, where $B_{NAE}$ is the boolean structure with a single ternary relation $NAE:=\{0,1\}^3 \setminus \{(0,0,0),(1,1,1)\}$. This problem is a generalisation of the \emph{quantified not-all-equal $3$-satisfiability} problem -- $\{ \exists, \forall, \wedge \}$-$\MC(B_{NAE})$, a.k.a. $\QCSP(B_{NAE})$ -- well-known to be \Pspace-complete (see \cite{ComputationalComplexity}). Let $\varphi$ be an input for $\{ \exists, \forall, \wedge, \vee \}$-$\MC(B_{NAE})$. Let $\varphi'$ be built from $\varphi$ by substituting all instances of $NAE(v,v',v'')$ by $E(v,v') \vee E(v',v'') \vee E(v,v'')$. It is easy to see that $B_{NAE} \models \varphi$ iff $K_2 \models \varphi'$, and the result follows.

For $\overline{K}_2$, we reduce from the complement of $\{ \exists, \forall, \wedge, \vee \}$-$\MC(K_2)$, which we now know to be \coPspace-complete, and use the fact that $\Pspace = \coPspace$ (see \cite{ComputationalComplexity}). Let $\varphi$ be an input for $\{ \exists, \forall, \wedge, \vee \}$-$\MC(K_2)$. Generate $\varphi'$ from $\varphi$ by  swapping all instances of $\exists$ and $\forall$, and swapping all instances of $\vee$ and $\wedge$. By de Morgan's laws we may derive that $K_2 \notmodels \varphi$ iff $\overline{K}_2 \models \varphi'$, and the result follows.
\end{proof}

\begin{lemma}
Let $B$ be a boolean structure s.t. both $\neg R_B(0,\ldots,0)$ and $\neg R_B(1,\ldots,1)$. Then $\{ \exists, \forall, \wedge, \vee \}$-$\MC(B)$ is \Pspace-complete.
\end{lemma}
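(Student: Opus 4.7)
The plan is to prove \Pspace-hardness by reducing from $\{\exists, \forall, \wedge, \vee\}$-$\MC(K_2)$, which is \Pspace-complete by Lemma~\ref{lem:K2andComp}. The strategy is to $\{\wedge, \vee\}$-define the edge relation of $K_2$ over the signature of $B$, and then substitute this definition for each occurrence of the edge predicate in an input sentence.

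The key observation is that, by assumption $(\dag)$, every constituent relation $R^B_i$ is non-empty, so the canonical relation $R_B$ contains at least one tuple $t = (t_1, \ldots, t_r)$. By the two hypotheses of the lemma, $t$ is neither $(0, \ldots, 0)$ nor $(1, \ldots, 1)$, hence $I := \{i : t_i = 0\}$ and $J := \{i : t_i = 1\}$ partition the set of positions of $R_B$ into two non-empty parts. I would then consider the quantifier-free $\{\wedge, \vee\}$-formula
\[ \psi(x, y) \ := \ R_B(\vec{u}) \ \vee \ R_B(\vec{u}'), \]
where $\vec{u}$ has $x$ in the positions in $I$ and $y$ in the positions in $J$, and $\vec{u}'$ is obtained by swapping these roles. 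Since $R_B$ itself is a conjunction of atoms from the signature of $B$, $\psi$ is a $\{\wedge,\vee\}$-formula over $B$.

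The verification that $\psi$ defines exactly the edge set $\{(0,1),(1,0)\}$ of $K_2$ on $B$ is immediate: the pair $(0,1)$ satisfies the first disjunct via the witness $t$, the pair $(1,0)$ satisfies the second disjunct via the same $t$, while neither $(0,0)$ nor $(1,1)$ can satisfy either disjunct, as this would require $R_B$ to contain the all-zero or the all-one tuple, forbidden by the two hypotheses. The reduction then takes an input sentence $\varphi$ of $\{\exists, \forall, \wedge, \vee\}$-$\MC(K_2)$ and replaces every atom $E(x,y)$ by $\psi(x,y)$, producing in logarithmic space a sentence $\varphi'$ over the signature of $B$ with $K_2 \models \varphi$ iff $B \models \varphi'$. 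No real obstacle arises: the two hypotheses forbid exactly the tuples that would otherwise pollute $\psi$, while the non-emptiness of $R_B$ supplies the witness for both crossing tuples, so the argument is essentially bookkeeping once $\psi$ is written down.
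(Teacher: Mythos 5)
Your proposal is correct and follows essentially the same route as the paper: both extract a tuple $t$ of $R_B$ (guaranteed by $(\dag)$, necessarily neither all-zero nor all-one by the hypotheses), identify the zero-positions and one-positions into two variables, and take the disjunction of the resulting binary relation with its transpose to define $K_2$, then reduce from $\{ \exists, \forall, \wedge, \vee \}$-$\MC(K_2)$. Your formula $\psi(x,y)$ is exactly the paper's $R''(v_I,v_J):=R'(v_I,v_J) \vee R'(v_J,v_I)$, and your verification is the same bookkeeping.
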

\begin{proof}
It follows from $(\dag)$ that $R_B$ contains some tuple $\tuple{w}:=(w_1,\ldots,w_r)$. Let $I | J$ be the partition of $\{v_1,\ldots,v_r\}$ s.t. $v_i \in I$ iff $w_i=0$. Create $R'(v_I,v_J)$ from  $R_B(v_1,\ldots,v_r)$ by identifying the variables of $I$ and $J$ as $v_I$ and $v_J$, respectively. Setting $R''(v_I,v_J):=R'(v_I,v_J) \vee R'(v_J,v_I)$ we note that $R''$ defines $K_2$. The result now follows via the obvious reduction from $\{ \exists, \forall, \wedge, \vee \}$-$\MC(K_2)$.
\end{proof}

\begin{lemma}
Let $B$ be a boolean structure s.t. both $R_B(0,\ldots,0)$ and $R_B(1,\ldots,1)$. Then $\{ \exists, \forall, \wedge, \vee \}$-$\MC(B)$ is \Pspace-complete.
\end{lemma}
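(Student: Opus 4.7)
The plan is to mirror the construction of the previous lemma, swapping the roles of disjunction and conjunction, so as to primitive-positive-with-$\vee$ define $\overline{K}_2$ inside $B$ and then reduce from $\{ \exists, \forall, \wedge, \vee \}$-$\MC(\overline{K}_2)$, which is \Pspace-complete by Lemma~\ref{lem:K2andComp}. The intuition is symmetric: where the previous lemma had $R_B$ containing some mixed tuple (but missing the two uniform tuples) and used $\vee$ to symmetrise it into $K_2$, we now have $R_B$ containing the two uniform tuples but missing some mixed tuple, so we use $\wedge$ to symmetrise the exclusion into $\overline{K}_2$.

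First I would extract a suitable missing tuple. By $(\dag)$, some relation $R^B_i$ is not full, so fixes a missing tuple $\tuple{u}_i$; and since, by hypothesis, $R^B_i(0,\ldots,0)$ and $R^B_i(1,\ldots,1)$ both hold, $\tuple{u}_i$ is neither the all-$0$ nor the all-$1$ tuple. Padding all other blocks of $R_B$ with zeros (which is allowed since $(0,\ldots,0) \in R^B_j$ for every $j$) yields a tuple $\tuple{w}=(w_1,\ldots,w_r)$ that is not in $R_B$ and contains both $0$s and $1$s.

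Next, following the preceding lemma verbatim, let $I | J$ be the partition of $\{v_1,\ldots,v_r\}$ with $v_i \in I$ iff $w_i=0$ (both parts nonempty), and let $R'(v_I,v_J)$ be obtained from $R_B(v_1,\ldots,v_r)$ by identifying all variables of $I$ to $v_I$ and all of $J$ to $v_J$. Then $R'(0,0)$ and $R'(1,1)$ both hold (from the all-$0$ and all-$1$ tuples of $R_B$), while $R'(0,1)$ fails (as it encodes $\tuple{w}$). Setting $R''(v_I,v_J) := R'(v_I,v_J) \wedge R'(v_J,v_I)$ — the dual of the previous lemma's $\vee$ — forces both $(0,1)$ and $(1,0)$ to be excluded, so $R''$ defines exactly $\overline{K}_2 = \{(0,0),(1,1)\}$. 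Note that $R'$ itself is simply a conjunction of atomic formulas over the signature of $B$ (namely $R_B$ written out as $R^B_1 \wedge \ldots \wedge R^B_m$ with the appropriate identifications), so $R''$ is a quantifier-free $\{\wedge\}$-formula.

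Finally, given an instance $\varphi$ of $\{ \exists, \forall, \wedge, \vee \}$-$\MC(\overline{K}_2)$, replace every atom $E(x,y)$ by $R''(x,y)$; this is a logspace substitution producing an equivalent instance $\varphi'$ with $\overline{K}_2 \models \varphi$ iff $B \models \varphi'$, completing the reduction. There is no genuine obstacle here beyond checking $(\dag)$ yields a missing \emph{mixed} tuple, which is why the earlier hypothesis that both uniform tuples lie in $R_B$ is exactly what is needed: it forbids $R^B_i$ from being missing only uniform tuples.
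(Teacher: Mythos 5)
Your proof is correct and follows essentially the same route as the paper's: take a tuple $\tuple{w}$ missing from $R_B$ (necessarily mixed, by the hypothesis that both uniform tuples are present), identify variables along the partition $I|J$ induced by $\tuple{w}$ to get a binary $R'$, symmetrise with $\wedge$ to define $\overline{K}_2$, and reduce from $\{ \exists, \forall, \wedge, \vee \}$-$\MC(\overline{K}_2)$. Your extra care in checking that $(\dag)$ together with the hypothesis yields a \emph{mixed} missing tuple (so that both $I$ and $J$ are non-empty) is a detail the paper leaves implicit, but the argument is the same.
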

\begin{proof}
It follows from $(\dag)$ that $R_B$ fails to contain some tuple $\tuple{w}:=(w_1,\ldots,w_r)$. Let $I | J$ be the partition of $\{v_1,\ldots,v_r\}$ s.t. $v_i \in I$ iff $w_i=0$. Create $R'(v_I,v_J)$ from  $R_B(v_1,\ldots,v_r)$ by identifying the variables of $I$ and $J$ as $v_I$ and $v_J$, respectively. Setting $R''(v_I,v_J):=R'(v_I,v_J) \wedge R'(v_J,v_I)$ we note that $R''$ defines $\overline{K}_2$. The result now follows via the obvious reduction from $\{ \exists, \forall, \wedge, \vee \}$-$\MC(\overline{K}_2)$.
\end{proof}


\begin{lemma}
Let $B$ be a boolean structure s.t. either 
\begin{itemize}
\item[$(i)$] $\neg R_B(0,\ldots,0)$ but $R_B(1,\ldots,1)$, or
\item[$(ii)$] $R_B(0,\ldots,0)$ but $\neg R_B(1,\ldots,1)$,
\end{itemize}
and $B$ contains no $\forall$-canon. Then $\{ \exists, \forall, \wedge, \vee \}$-$\MC(B)$ is \Pspace-complete.
\end{lemma}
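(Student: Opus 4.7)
My plan is to $\{\exists, \wedge\}$-define $\overline{K}_2$ over $B$ and then reduce from $\{ \exists, \forall, \wedge, \vee \}$-$\MC(\overline{K}_2)$ via Lemma~\ref{lem:K2andComp}. I work in case~$(i)$; case~$(ii)$ follows by swapping the roles of $0$ and $1$.

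First I would re-express the $\forall$-canon hypothesis combinatorially: $0$ is a $\forall$-canon iff $R_B$ is upward closed under the coordinatewise order on $\{0,1\}^r$, and symmetrically $1$ is a $\forall$-canon iff $R_B$ is downward closed. In case~$(i)$ downward closure already fails via $(1,\ldots,1) \in R_B$ and $(0,\ldots,0) \notin R_B$, so the hypothesis yields an honest failure of upward closure. Walking along a saturated chain, I extract a pair $\tuple{a} \in R_B$, $\tuple{b} \notin R_B$ that differ in exactly one coordinate, WLOG coordinate $1$, with $a_1 = 0$ and $b_1 = 1$. Setting $J_0 := \{k \geq 2 : a_k = 0\}$ and $J_1 := \{k \geq 2 : a_k = 1\}$, I check that both are nonempty: $J_0 = \emptyset$ would force $\tuple{b} = (1,\ldots,1) \in R_B$, and $J_1 = \emptyset$ would force $\tuple{a} = (0,\ldots,0) \notin R_B$, both contradictions.

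Next I exploit that in case~$(i)$ the formula $R_B(w,\ldots,w)$ holds iff $w = 1$, so the guard $\exists w\,(R_B(w,\ldots,w) \wedge \Phi)$ acts as a ``substitute the constant $1$ for $w$ in $\Phi$'' device. I introduce the $\{\wedge\}$-formula $\psi(x,y,z) := R_B(\vec{u})$ obtained from $R_B(v_1,\ldots,v_r)$ by identifying $v_1$ with $x$, each $v_k$ for $k \in J_0$ with $y$, and each $v_k$ for $k \in J_1$ with $z$. From the four tuples $(0,\ldots,0), \tuple{a}, \tuple{b}, (1,\ldots,1)$ I read off $\psi(0,0,0) = \mathrm{F}$, $\psi(0,0,1) = \mathrm{T}$, $\psi(1,0,1) = \mathrm{F}$ and $\psi(1,1,1) = \mathrm{T}$; the other four values are undetermined. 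Then I define
\[
S(x_1, x_2) \;:=\; \exists w_1, w_2\,\bigl( R_B(w_1,\ldots,w_1) \wedge R_B(w_2,\ldots,w_2) \wedge \psi(x_1,x_2,w_1) \wedge \psi(x_2,x_1,w_2) \bigr),
\]
which is equivalent over $B$ to $\psi(x_1,x_2,1) \wedge \psi(x_2,x_1,1)$. The four cases evaluate to $S(0,0) = S(1,1) = \mathrm{T}$, whereas $S(0,1) = {?} \wedge \mathrm{F} = \mathrm{F}$ and $S(1,0) = \mathrm{F} \wedge {?} = \mathrm{F}$, independent of the unknown entries. Hence $S$ defines $\overline{K}_2$ on $B$, and substituting each edge atom $E(x,y)$ by $S(x,y)$ in any input to $\{ \exists, \forall, \wedge, \vee \}$-$\MC(\overline{K}_2)$ gives a logspace reduction to $\{ \exists, \forall, \wedge, \vee \}$-$\MC(B)$. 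Case~$(ii)$ is identical after swapping $0$ and $1$: $R_B(w,\ldots,w)$ now pins down $w = 0$, the hypothesis delivers a one-bit failure of downward closure, and the analogous $S$ again defines $\overline{K}_2$.

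I expect the main subtlety to lie in the setup, not the verification. The single-bit witness must be chosen so that the three blocks $\{1\}$, $J_0$, $J_1$ are all nonempty, and the substitution inside $\psi$ must pair each unknown entry with a known $\mathrm{F}$ entry under the swap $(x_1,x_2) \leftrightarrow (x_2,x_1)$, so that the outer conjunction annihilates the unknowns independently of what $R_B$ does at those arguments. The remainder is routine logspace bookkeeping.
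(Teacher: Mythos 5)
Your proposal is correct and follows essentially the same route as the paper's own proof: both extract from the failure of $0$ being a $\forall$-canon a three-block identification of the coordinates of $R_B$ yielding a ternary relation containing $(1,1,1)$ and $(0,0,1)$ but not $(0,0,0)$ or $(1,0,1)$, pin the third argument to $1$ via the diagonal $R_B(w,\ldots,w)$, and symmetrize with a conjunction to define $\overline{K}_2$, reducing from Lemma~\ref{lem:K2andComp}. The only (harmless) differences are that you normalise the witness to a single flipped coordinate via a chain argument and make the existential guard explicit where the paper leaves $v_{J_1}$ implicitly quantified.
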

\begin{proof}
We prove the first case; the second follows by symmetry. 
Knowing that $0$ is not a $\forall$-canon, we can derive the existence of some partition $I | J$ of $\{v_1,\ldots,v_k\}$ s.t. $R_B(I/0,J/\tuple{w})$ but $\neg R_B(I/1,J/\tuple{w})$, where $\tuple{w}$ is some $|J|$-tuple instantiation of the elements of $J$. We further partition $J$ into $J_0|J_1$ according to whether the corresponding instantiation in $\tuple{w}$ is a $0$ or $1$. Note that each of $I$, $J_0$ and $J_1$ is non-empty. Create $R'(v_I,v_{J_0},v_{J_1})$ from  $R_B(v_1,\ldots,v_r)$ by identifying the variables in $I$, $J_0$ and $J_1$ as $v_I$, $v_{J_0}$ and $v_{J_1}$, respectively.  
Note that each of $v_I$, $v_{J_0}$ and $v_{J_1}$ appears free in $R'(v_I,v_{J_0},v_{J_1})$, which is s.t.
\[
\begin{array}{cc}
\in R' & \notin R' \\
(1,1,1) & (0,0,0) \\
(0,0,1) & (1,0,1)
\end{array}
\]
Consider $R''(v_I,v_{J_0}):=R_B(v_{J_1},\ldots,v_{J_1}) \wedge R'(v_I,v_{J_0},v_{J_1})$. It follows that $(0,0),(1,1) \in R''$ but $(1,0) \notin R''$. Now define $R'''(v_I,v_{J_0}):=R''(v_I,v_{J_0}) \wedge R''(v_{J_0},v_I)$. $R'''$ defines $\overline{K}_2$, and the result now follows via the obvious reduction from $\{ \exists, \forall, \wedge, \vee \}$-$\MC(\overline{K}_2)$.
\end{proof}

\section{Digraphs of size three}
\label{sec:digraphstart}

Let $H$ be a \emph{digraph}, that is a relational structure involving a single binary relation $E$.
For a digraph $H$, let $\overline{H}$ be the complement digraph over the same vertex set $|H|$ but with $E^{\overline{H}}:=|H|^2 \setminus E^H$. 
The following observation, essentially an extension of the second part of Lemma~\ref{lem:K2andComp}, will be of great use to us.
\begin{lemma}
\label{lem:duality}
Let $H$ be a digraph s.t. $\{ \exists, \forall, \wedge, \vee \}$-$\MC(H)$ is in \Logspace\ (respectively, is \Pspace-complete), then $\{ \exists, \forall, \wedge, \vee \}$-$\MC(\overline{H})$ is in \Logspace\ (respectively, is \Pspace-complete). Furthermore, if $\{ \exists, \forall, \wedge, \vee \}$-$\MC(H)$ is \NP-complete (respectively, is \coNP-complete), then $\{ \exists, \forall, \wedge, \vee \}$-$\MC(\overline{H})$ is \coNP-complete (respectively, is \NP-complete).
\end{lemma}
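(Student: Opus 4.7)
The plan is to generalise the construction used in the second half of Lemma~\ref{lem:K2andComp}. Given any sentence $\varphi$ of $\{ \exists, \forall, \wedge, \vee \}$-\FO, define its \emph{dual} $\varphi^d$ by swapping every $\exists$ with $\forall$ and every $\wedge$ with $\vee$ (leaving atomic formulae $E(x,y)$ untouched). This transformation is clearly a logspace many-to-one operation. I claim that
\[
H \models \varphi \quad \text{if and only if} \quad \overline{H} \notmodels \varphi^d.
\]
This is the single technical fact driving everything else.

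To establish the claim, I would proceed by induction on the structure of $\varphi$, using de Morgan's laws together with the observation that $E^{\overline{H}}$ is the complement of $E^H$ in $|H|^2$. Concretely, $\neg \varphi$ can be rewritten (by pushing negations inward) into a sentence whose connectives and quantifiers are precisely those of $\varphi^d$, but with each atom $E(x,y)$ replaced by $\neg E(x,y)$; evaluating this on $H$ is the same as evaluating $\varphi^d$ with edges interpreted as $E^{\overline{H}}$, i.e.\ evaluating $\varphi^d$ on $\overline{H}$. The positivity and equality-freeness of $\varphi$ are exactly what guarantee that this symmetric rewriting stays inside $\{ \exists, \forall, \wedge, \vee \}$-\FO\ when we interpret $\neg E$ as $E$ over the complementary digraph.

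Having this reduction in both directions (via $\overline{\overline{H}}=H$), the complexity transfers follow from standard closure properties. If $\{ \exists, \forall, \wedge, \vee \}$-$\MC(H)$ is in \Logspace, then so is its complement (as \Logspace\ is closed under complement), and the reduction yields $\{ \exists, \forall, \wedge, \vee \}$-$\MC(\overline{H}) \in \Logspace$. The \Pspace-complete case works identically using $\Pspace = \coPspace$, noting that membership in \Pspace\ is already guaranteed by the preliminaries. For the \NP-complete case, the complement of $\{ \exists, \forall, \wedge, \vee \}$-$\MC(H)$ is \coNP-complete, and our reduction shows both that this \coNP-complete problem reduces to $\{ \exists, \forall, \wedge, \vee \}$-$\MC(\overline{H})$ (giving \coNP-hardness) and that $\{ \exists, \forall, \wedge, \vee \}$-$\MC(\overline{H})$ reduces into \coNP\ (giving membership); the \coNP-complete case is symmetric.

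The main thing to be careful about is really just bookkeeping in the de Morgan step: one must check that the dual transformation does not accidentally introduce equality or negation, and that atomic edge literals transform correctly under the swap from $E^H$ to $E^{\overline{H}}$. Neither of these is substantive, so I expect no genuine obstacle beyond making sure the induction covers prenex and non-prenex forms uniformly (which we may take for granted, since the preliminaries allow us to assume prenex input).
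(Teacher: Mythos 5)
Your proposal is correct and follows essentially the same route as the paper: the dual transformation $\varphi \mapsto \varphi^d$ is exactly the paper's map $\psi_0 \mapsto \psi_2$, the key equivalence $H \models \varphi \Leftrightarrow \overline{H} \notmodels \varphi^d$ is the paper's $(*)$, and the complexity transfers use the same closure properties (\Logspace\ and \Pspace\ closed under complement, \NP/\coNP\ duality). The only cosmetic difference is that you spell out the de Morgan step as a structural induction where the paper simply asserts it.
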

\begin{proof}
First, recall that both \Logspace\ and \Pspace\ are closed under complementation (see \cite{ComputationalComplexity}). 
Now, consider a (prenex) sentence $\psi_0$ of $\{ \exists, \forall, \wedge, \vee \}$-$\FO$. By de Morgan's laws, it is clear that $\psi_0$ is logically equivalent to the sentence $\neg \psi_1$ where $\psi_1$ is derived from $\psi$ by I.) swapping all instances of $\exists$ and $\forall$, II.) swapping all instances of $\vee$ and $\wedge$ and III.) negating all atoms (in the quantifer-free part). Let $\psi_2$ be derived from $\psi_0$ in a similar manner, but without the execution of part III (negating the atoms). It is clear that, for any digraph $H$,
\[
\begin{array}{lc}
(*) & H \models \psi_0 \ \Leftrightarrow \ H \models \neg \psi_1 \ \Leftrightarrow \ H \notmodels \psi_1 \ \Leftrightarrow \ \overline{H} \notmodels \psi_2.
\end{array}
\]
We reduce the complement of the problem $\{ \exists, \forall, \wedge, \vee \}$-$\MC(H)$ to $\{ \exists, \forall, \wedge, \vee \}$-$\MC(\overline{H})$ by the mapping $\psi_0 \mapsto \psi_2$. The results all follow from (the contrapositive of) $(*)$.
\end{proof}
In a digraph $H$, a vertex $x$ is termed a \emph{$\forall$-canon} if, for all $y \in H$, $E(x,y) \Rightarrow \forall z \ E(z,y)$ and $E(y,x) \Rightarrow \forall z \ E(y,z)$.
Dually, a vertex $x \in H$ is termed a \emph{$\exists$-canon} if, for all $y,z \in H$, $E(y,z) \Rightarrow ( E(x,z) \wedge E(y,x) )$. Note that these definitions are consistent, on boolean digraphs, with those given in Section~\ref{sec:booleanstart} (though they are given in a rather liberal notation). Being a $\forall$-canon (respectively, $\exists$-canon) is equivalent to, in the parlance of, e.g., \cite{LaroseLotenTardif}, being \emph{dominated by} (respectively, \emph{dominating})\footnote{Although this is different from the graph-theoretic notion of a dominating vertex, e.g., as used in \cite{CiE2008}.} every vertex of $H$.
It may be verified that a vertex $x \in H$ is a $\forall$-canon (respectively, $\exists$-canon) iff $x \in \overline{H}$ is a $\exists$-canon (respectively, $\forall$-canon). However, our definitions are motivated primarily by the following.
\begin{lemma}
\label{lem:canons}
Let $H$ be a digraph and let $\varphi$ be a (prenex) sentence of $\{ \exists, \forall, \wedge, \vee \}$-\FO.
\begin{itemize}
\item If $x \in H$ is a $\forall$-canon, then $H \models \varphi$ iff $H \models \varphi_{[\forall/x]}$, where $\varphi_{[\forall/x]}$ is obtained from $\varphi$ by instantiating each of the universal variables as the vertex $x$. Consequently, $\{ \exists, \forall, \wedge, \vee \}$-$\MC(H)$ is in \NP.
\item If $y \in H$ is a $\exists$-canon, then $H \models \varphi$ iff $H \models \varphi_{[\exists/y]}$, where $\varphi_{[\exists/y]}$ is obtained from $\varphi$ by instantiating each of the existential variables as the vertex $y$. Consequently, $\{ \exists, \forall, \wedge, \vee \}$-$\MC(H)$ is in \coNP.
\item If $x,y \in H$ are $\exists$-canon and $\forall$-canon, respectively, then $H \models \varphi$ iff $H \models \varphi_{[\forall/x,\exists/y]}$, where $\varphi_{[\forall/x,\exists/y]}$ is obtained from $\varphi$ by instantiating each of the universal variables as $x$ and the existential variables as $y$. Consequently, $\{ \exists, \forall, \wedge, \vee \}$-$\MC(H)$ is in \Logspace.
\end{itemize}
\end{lemma}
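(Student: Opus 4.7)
The plan is to prove the first item in detail; the second follows by an exactly dual argument, and the third by chaining the two substitutions.

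For the first item, the direction $H \models \varphi \Rightarrow H \models \varphi_{[\forall/x]}$ is immediate, since instantiating each universal variable as $x$ is a legal adversarial choice. So assume $H \models \varphi_{[\forall/x]}$ and write $\varphi$ in prenex form with positive matrix $\phi$. The key observation is that, in $\varphi_{[\forall/x]}$, every universal variable has already been frozen to the constant $x$, so the existential witnesses can be chosen to be \emph{constants} $e_{j_1},\ldots,e_{j_m} \in |H|$ that do not depend on any universal value. I will claim these same constants still witness $\varphi$: for any instantiation $y_1,\ldots,y_k$ of the universal variables of $\varphi$, the matrix $\phi$ evaluated with the $y_i$'s and $e_{j_\ell}$'s remains true.

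By positivity it suffices to verify, atom-by-atom, that each $E(a,b)$ of $\phi$ true under the all-$x$ assignment stays true under the $y$-assignment. Splitting into four cases by how many of $a,b$ are universal: the purely existential case is unaffected; the two mixed cases invoke the respective halves of the $\forall$-canon definition directly; and the purely universal case (atom becomes $E(y_1,y_2)$ from $E(x,x)$) is handled by chaining both halves -- first $E(x,y) \Rightarrow \forall z\, E(z,y)$ applied with $y := x$ yields $E(y_1,x)$, then $E(y,x) \Rightarrow \forall z\, E(y,z)$ applied with $y := y_1$ yields $E(y_1,y_2)$. This short chaining is the only step that really calls for any care. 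Membership in \NP\ then follows, since $\varphi_{[\forall/x]}$ is a sentence of $\{\exists,\wedge,\vee\}$-\FO\ and evaluation of such sentences on $H$ is in \NP\ by the discussion in Section~2.

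The second item is exactly dual: the $\exists$-canon condition $E(a,b) \Rightarrow E(y,b)\wedge E(a,y)$ drives the symmetric atom-by-atom check, where the mildly subtle case is now the purely-existential one, requiring two applications of the clause to derive $E(y,y)$ from $E(e_1,e_2)$. The resulting $\{\forall,\wedge,\vee\}$-\FO\ sentence lies in \coNP, since a falsifying universal instantiation is a polynomial-size \NP\ certificate for its negation. For the third item, applying both substitutions simultaneously (the same atom analysis goes through verbatim) collapses $\varphi$ to a quantifier-free positive sentence over the fixed $H$, whose evaluation reduces to the Boolean Sentence Value Problem and is therefore in \Logspace.
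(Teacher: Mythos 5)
Your proof is correct and takes essentially the same route as the paper, whose entire argument is the one-liner that ``all results follow straight from the definitions since $\varphi$ is positive'' (plus the appeal to the Boolean Sentence Value Problem for the \Logspace\ claim); your atom-by-atom monotonicity check, including the two-step chaining for the purely-universal and purely-existential atoms, is exactly the detail the paper leaves implicit.
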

\begin{proof}
Recall that the Boolean Sentence Value Problem is in \Logspace\ \cite{NLynch}. All results follow straight from the definitions since $\varphi$ is positive.
\end{proof}
While the presence of both a $\forall$-canon and a $\exists$-canon is a sufficient condition for tractability of $\{ \exists, \forall, \wedge, \vee \}$-$\MC(H)$, we will see later that it is not necessary.
A vertex $x \in H$ is \emph{isolated} if, for all $y \in H$, $\neg E(x,y) \wedge \neg E(y,x)$; an isolated vertex is a $\forall$-canon.

For a digraph $H$, let $\mbox{sym-clos}(H)$ and $\mbox{tran-clos}(H)$ be the symmetric and transitive closures of $H$, respectively. Let $\mbox{doub}(H)$ be the subdigraph induced by the double edges of $H$; that is, $E^{\mbox{doub}(H)}(x,y)$ iff $E^H(x,y)$ and $E^H(y,x)$ (whereas $E^{\mbox{sym-clos}(H)}(x,y)$ iff $E^H(x,y)$ or $E^H(y,x)$). The following is a another basic observation.
\begin{lemma}
\label{lem:symtransetc}
Let $H$ be a digraph. $\{ \exists, \forall, \wedge, \vee \}$-$\MC(\mbox{sym-clos}(H))$, $\{ \exists, \forall, \wedge, \vee \}$-$\MC(\mbox{tran-clos}(H))$ and $\{ \exists, \forall, \wedge, \vee \}$-$\MC(\mbox{doub}(H))$ are all polynomial-time reducible to $\{ \exists, \forall, \wedge, \vee \}$-$\MC(H)$.
\end{lemma}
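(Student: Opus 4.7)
The plan is to realise each of the three closure operations via a uniform syntactic substitution on the atoms of the input sentence. For each operation $\mathrm{cl} \in \{\symclos, \tranclos, \doub\}$, I intend to exhibit a positive equality-free formula $\psi^{\mathrm{cl}}_E(u,v)$ over the signature of $H$ such that, for all $a,b \in |H|$, one has $E^{\mathrm{cl}(H)}(a,b)$ iff $H \models \psi^{\mathrm{cl}}_E(a,b)$. The reduction will then take an input sentence $\varphi$ for $\{\exists,\forall,\wedge,\vee\}$-$\MC(\mathrm{cl}(H))$ and output the sentence $\varphi'$ obtained by replacing each atomic occurrence $E(u,v)$ in $\varphi$ by $\psi^{\mathrm{cl}}_E(u,v)$, with bound variables renamed fresh at each occurrence. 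A straightforward induction on the structure of $\varphi$, using positivity and the absence of equality (so that substitution commutes with the connectives and quantifiers), will then give $\mathrm{cl}(H) \models \varphi$ iff $H \models \varphi'$.

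For the three witnesses I would take $\psi^{\symclos}_E(u,v) := E(u,v) \vee E(v,u)$ and $\psi^{\doub}_E(u,v) := E(u,v) \wedge E(v,u)$, which literally transcribe the definitions of these two closures. For the transitive case, writing $n := ||H||$ and exploiting the fact that any pair of vertices related by $\tranclos(H)$ is joined in $H$ by a walk of length at most $n$, I would set
\[
\psi^{\tranclos}_E(u,v) \ := \ \bigvee_{k=1}^{n} \exists z_1 \cdots \exists z_{k-1} \bigl( E(u,z_1) \wedge E(z_1,z_2) \wedge \cdots \wedge E(z_{k-1},v) \bigr),
\]
with the $z_i$ fresh at each occurrence and the $k=1$ disjunct being the atom $E(u,v)$. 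Since $n$ is a constant depending only on the fixed digraph $H$, the size of $\varphi'$ is $O(|\varphi| \cdot n^2)$, which is polynomial (in fact linear) in $|\varphi|$.

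The main, and arguably only, obstacle I anticipate is restoring $\varphi'$ to the prenex form assumed throughout the paper, particularly in the $\tranclos$ case where the substitution introduces fresh existential quantifiers deep inside the propositional structure of $\varphi$. This is handled by the standard logspace prenexing procedure for positive \FO\ referenced in the preliminaries: because the newly-bound variables are fresh, each $\exists z$ can be hoisted past the propositional connectives above it and inserted into the prefix immediately after the quantifier block on which its subformula depends. No equality is introduced and positivity is preserved, so the result lies in $\{\exists,\forall,\wedge,\vee\}$-\FO, and composing the substitution with this prenexing step yields the required polynomial-time reduction in each of the three cases.
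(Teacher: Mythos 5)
Your proposal is correct and matches the paper's proof essentially step for step: the same atomic substitutions $E(u,v)\vee E(v,u)$ and $E(u,v)\wedge E(v,u)$ for $\symclos$ and $\doub$, and the same bounded disjunction over existentially quantified walks for $\tranclos$ (the paper bounds path length by $n-1$ rather than $n$, an immaterial difference). Your extra remarks on fresh variables and re-prenexing only make explicit what the paper leaves implicit via its preliminary remark that prenexing is a logspace operation.
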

\begin{proof}
We may reduce $\{ \exists, \forall, \wedge, \vee \}$-$\MC(\mbox{sym-clos}(H))$ to $\{ \exists, \forall, \wedge, \vee \}$-$\MC(H)$ by substituting instances of $E(u,v)$ in an input $\varphi$ in the former by $E(u,v) \vee E(v,u)$ in the latter. For $\mbox{doub}(H)$ the method is similar, but the substitution is now $E(u,v)$ by $E(u,v) \wedge E(v,u)$.

For $\mbox{tran-clos}(H)$, assume $H$ is of size $n$. For $x,y \in |H|$, if there is a path in $H$ from $x$ to $y$, then there is a path from $x$ to $y$ of length $\leq n-1$. Any instances of $E(u,v)$ in an input $\varphi$ for  $\{ \exists, \forall, \wedge, \vee \}$-$\MC(\mbox{tran-clos}(H))$ should be converted to
\[
\begin{array}{ll}
\exists w^{uv}_1,\ldots,w^{uv}_{n-2} & E(u,v) \ \ \ \ \ \ \vee \\
 & E(u,w^{uv}_1) \wedge E(w^{uv}_1,v) \ \ \ \ \ \ \vee \\
 & \vdots \\
 & E(u,w^{uv}_1) \wedge E(w^{uv}_1,w^{uv}_2) \wedge \ldots \wedge E(w^{uv}_{n-3},w^{uv}_{n-2}) \wedge E(w^{uv}_{n-2},v) \\
\end{array}
\]
in an instance of $\{ \exists, \forall, \wedge, \vee \}$-$\MC(\mbox{tran-clos}(H))$.
\end{proof}
For a digraph $H$ and $x \in H$, define $H \setminus \{x\}$ to be the induced subdigraph of $H$ on vertex set $|H| \setminus \{x\}$. We will also need the following result.
\begin{lemma}
\label{lem:equfree}
Let $x,y \in H$ be vertices that satisfy, for all $z \in H$, $E(x,z) \Leftrightarrow E(y,z)$ and $E(z,x) \Leftrightarrow E(x,y)$. Then $\{\neg, \exists, \forall, \wedge, \vee \}$-$\MC(H)=$ $\{\neg, \exists, \forall, \wedge, \vee \}$-$\MC(H \setminus \{x\})$.
\end{lemma}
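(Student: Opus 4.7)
The plan is to show something stronger than the equality of the two problems: for every formula $\varphi(\bar v)$ of $\{\neg,\exists,\forall,\wedge,\vee\}$-\FO\ and every assignment $\bar a$ in $|H|$, the truth of $\varphi$ in $H$ under $\bar a$ is insensitive to swapping $x$ and $y$. I would introduce the \emph{retraction} $\pi : |H| \to |H|$ defined by $\pi(x) := y$ and $\pi(z) := z$ for $z \neq x$, whose image is $|H \setminus \{x\}|$.

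The first step is to verify that $\pi$ is a strong homomorphism of $H$: for all $a,b \in |H|$, $E^H(a,b)$ iff $E^H(\pi(a),\pi(b))$. The three cases where at most one of $a,b$ equals $x$ follow immediately from the two hypotheses $E(x,z)\Leftrightarrow E(y,z)$ and $E(z,x)\Leftrightarrow E(z,y)$ (which I read as the intended assumption, the other being a typo). The remaining case $a=b=x$ is handled by chaining them: setting $z=x$ in the first gives $E(x,x)\Leftrightarrow E(y,x)$ and then setting $z=y$ in the second gives $E(y,x)\Leftrightarrow E(y,y)$.

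Next I would prove by structural induction on $\varphi$ the key lemma: for every formula $\varphi(\bar v)$ and every tuple $\bar a$ in $|H|$, $H,\bar a \models \varphi$ iff $H,\pi(\bar a) \models \varphi$. The atomic case is exactly the previous step applied componentwise; the boolean and negation cases are immediate; and in the quantifier case, say $\varphi = \exists v\, \psi(v,\bar v)$, a witness $c \in |H|$ for $H,\bar a$ produces, via the induction hypothesis applied to the enlarged tuple $(c,\bar a)$, the witness $\pi(c)$ for $H,\pi(\bar a)$, and conversely a witness $c'$ for $H,\pi(\bar a)$ yields, using $\pi\circ\pi = \pi$ and two applications of the induction hypothesis, the witness $c'$ for $H,\bar a$. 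The universal case is dual.

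Finally, I would deduce the lemma itself by a second induction showing that for every formula $\varphi(\bar v)$ and every tuple $\bar a$ in $|H \setminus \{x\}|$, $H,\bar a \models \varphi$ iff $H \setminus \{x\},\bar a \models \varphi$. All cases are routine except the existential (and dually universal) quantifier: given a witness $c \in |H|$ for $H,\bar a$, the key lemma replaces it by $\pi(c) \in |H \setminus \{x\}|$, after which the inductive hypothesis moves the model checking into $H \setminus \{x\}$. Applied to sentences, this yields the claimed identity of the two model-checking problems. The main friction is bookkeeping: one must be careful in the quantifier step of the key lemma to apply the induction hypothesis to the tuple with the quantified variable included (so that $\pi$ acts on that coordinate as well), and one must verify the $a=b=x$ atomic case separately since it is not a direct instance of either hypothesis.
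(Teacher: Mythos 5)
Your proof is correct and follows essentially the same route as the paper's: the paper observes that the map collapsing $x$ to $y$ is a surjective strong homomorphism onto $H \setminus \{x\}$ and cites the Homomorphism Theorem in Enderton for the preservation of equality-free first-order sentences, whereas you supply the inductive details yourself via the retraction $\pi$ and a two-stage induction. You also correctly read the second hypothesis as $E(z,x) \Leftrightarrow E(z,y)$ (the statement as printed contains a typo), and your separate verification of the $a=b=x$ atomic case is exactly the point that needs checking.
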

\begin{proof}
Intuitively, FO logic without equality can not distinguish between $x$ and $y$, and so can not tell if only one of them is there. More formally, one observes that the surjective homomorphism $h:H \rightarrow H \setminus \{x\}$ given by $x \mapsto y$ together with the identity on $H \setminus \{x\}$ has the property that it preserves negated (as well as positive) atoms. For full details see, e.g, the Homomorphism Theorem in \cite{Enderton}.
\end{proof}
Let $K_1=K^0_1$, $K_2=K^{00}_2$, $K_3=K^{000}_3$, $K^1_1$, $K^{11}_2$ and $K^{111}_3$ be the complete antireflexive digraphs on $1$, $2$ and $3$ vertices and complete reflexive digraphs on $1$, $2$ and $3$ vertices, respectively.  Let $P^{00}_2$, $P^{000}_3$, $DP^{00}_2$ and $DP^{000}_3$ denote the antireflexive undirected $1$- and $2$-paths and the antireflexive directed $1$- and $2$-paths, respectively. The superscripted $1$s and $0$s indicate vertices with or without self-loops, respectively, whence the meaning of, say, $DP^{100}_3$ as a directed $2$-path whose first vertex is the only self-loop, should become clear. We will also build non-connected digraphs from the disjoint union of certain of these. Note that our digraphs may have multiple notations under our various conventions, e.g. $K_2=P^{00}_2$, $\overline{K}_2=K^1_1 \uplus K^1_1$ and $P^{01}_2=P^{10}_2$ (although $DP^{01}_2 \neq DP^{10}_2$). 
\begin{proposition}
\label{prop:basicresults}
The following basic results will form the backbone of our tetrachotomy.
\[
\begin{array}{lll}
(i) & \{ \exists, \forall, \wedge, \vee \}\mbox{-}\MC(K_2) & \mbox{is \Pspace-complete} \\
(ii) & \{ \exists, \forall, \wedge, \vee \}\mbox{-}\MC(\overline{K}_2) & \mbox{is \Pspace-complete} \\
(iii) & \{ \exists, \forall, \wedge, \vee \}\mbox{-}\MC(K_3) & \mbox{is \Pspace-complete} \\
(iv) & \{ \exists, \forall, \wedge, \vee \}\mbox{-}\MC(\overline{K}_3) & \mbox{is \Pspace-complete} \\
(v) & \{ \exists, \forall, \wedge, \vee \}\mbox{-}\MC(K_1 \uplus K_2) & \mbox{is \NP-complete} \\
(vi) & \{ \exists, \forall, \wedge, \vee \}\mbox{-}\MC(P^{000}_3) & \mbox{is \Pspace-complete} \\
(vii) & \{ \exists, \forall, \wedge, \vee \}\mbox{-}\MC(K^1_1 \uplus K^{11}_2) & \mbox{is \Pspace-complete}
\end{array}
\]
\end{proposition}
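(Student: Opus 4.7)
The seven items decompose into verbatim restatements of earlier results and short reductions. Items (i) and (ii) are immediate from Lemma~\ref{lem:K2andComp}. Item (iv) will follow from item (iii) by Lemma~\ref{lem:duality}, which preserves both \Logspace\ and \Pspace-completeness (and swaps \NP- with \coNP-completeness). For (iii) itself, the plan is to cite the classical \Pspace-completeness of the quantified $3$-colouring problem, which is exactly $\{\exists,\forall,\wedge\}$-\MC$(K_3) = \QCSP(K_3)$; since any conjunctive-positive sentence is also a $\{\exists,\forall,\wedge,\vee\}$-sentence, the hardness transfers trivially to $\{\exists,\forall,\wedge,\vee\}$-\MC$(K_3)$.

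For items (vi) and (vii) the strategy is to apply Lemma~\ref{lem:equfree} to collapse a pair of indistinguishable vertices and so reduce to an already-classified case. In $P^{000}_3$, the two endpoints of the path share their in- and out-neighbourhoods -- each is just the middle vertex -- so deleting one leaves $K_2$; hence $\{\exists,\forall,\wedge,\vee\}$-\MC$(P^{000}_3) = \{\exists,\forall,\wedge,\vee\}$-\MC$(K_2)$ and (vi) follows from (i). In $K^{11}_2$ the two vertices are likewise indistinguishable (each carries its self-loop and the cross-edge), so removing one vertex of the $K^{11}_2$-component leaves $K^1_1 \uplus K^1_1 = \overline{K}_2$ (as the paper itself notes), and (vii) follows from (ii).

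Item (v) is the only one requiring matching upper and lower bounds. For \NP-membership I will observe that the isolated vertex $v$ of $K_1 \uplus K_2$ vacuously satisfies the $\forall$-canon definition -- the premises $E(v,\cdot)$ and $E(\cdot,v)$ are identically false -- so Lemma~\ref{lem:canons} places the problem in \NP. For \NP-hardness I will reduce from $\{\exists,\wedge\}$-\MC$(B_{NAE})$, i.e.\ monotone $\NAE$-$3$-\textsc{sat}, which is \NP-complete. Given an input $\psi$, replace every atom $NAE(x,y,z)$ by the gadget $E(x,y)\vee E(y,z)\vee E(x,z)$ -- precisely the gadget used for $K_2$ in the proof of Lemma~\ref{lem:K2andComp} -- and pad the existential prefix with one fresh witness $y_i$ per input variable $x_i$ together with the conjunct $E(x_i,y_i)$. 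The edge-witness forces each $x_i$ into the non-isolated part of $K_1 \uplus K_2$, whose two elements then play exactly the role of the booleans $\{0,1\}$ under the $\NAE$-gadget; the equivalence $B_{NAE} \models \psi \iff (K_1 \uplus K_2) \models \psi'$ is then immediate.

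The main obstacle, in my view, is item (iii): the indistinguishability arguments for (vi) and (vii) and the padding argument for (v) are essentially one-line verifications once the supporting lemmas are in hand, whereas a self-contained proof of \Pspace-completeness of $\{\exists,\forall,\wedge,\vee\}$-\MC$(K_3)$ would require a nontrivial reduction from \textsc{qbf}. My plan sidesteps this by relying on the well-known \Pspace-completeness of quantified $3$-colouring.
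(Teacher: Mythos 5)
Your proposal is correct and, for items (i)--(iv), (vi) and (vii), follows exactly the paper's route: Lemma~\ref{lem:K2andComp} for the boolean digraphs, containment of $\QCSP(K_3)$ for (iii), Lemma~\ref{lem:duality} for (iv), Lemma~\ref{lem:canons} for \NP-membership in (v), and Lemma~\ref{lem:equfree} collapsing the indistinguishable pair of vertices in $P^{000}_3$ (respectively $K^{11}_2$) onto $K_2$ (respectively $\overline{K}_2$) for (vi) and (vii). The one genuine divergence is the \NP-hardness half of (v): the paper invokes the fact that $K_1 \uplus K_2$ and $K_2$ are homomorphically equivalent and hence agree on all of $\{\exists,\wedge,\vee\}$-\FO\ (citing \cite{CiE2008}), then reuses the $NAE$ gadget for $K_2$; you instead reduce monotone not-all-equal $3$-satisfiability directly, adding a conjunct $E(x_i,y_i)$ with a fresh existential witness $y_i$ to force each variable off the isolated vertex and into the $K_2$ component. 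Your padding argument is self-contained and verifiably sound (a vertex with an out-edge cannot be the isolated one, so the $K_2$ component simulates the booleans), at the cost of an explicit gadget check; the paper's version is shorter but leans on an external equivalence of existential-positive theories.
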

\begin{proof}
$(i)$ and $(ii)$. Are proved in Lemma~\ref{lem:K2andComp}.

$(iii)$. $\{ \exists, \forall, \wedge, \vee \}$-$\MC(K_3)$ contains the problem $\{ \exists, \forall, \wedge \}$-$\MC(K_3)$, a.k.a. $\QCSP(K_3)$, as a special instance. The latter is well-known to be \Pspace-complete (see \cite{OxfordQuantifiedConstraints}).

$(iv)$. Follows from $(iii)$ via Lemma~\ref{lem:duality}. 

$(v)$. For $\{ \exists, \forall, \wedge, \vee \}\mbox{-}\MC(K_1 \uplus K_2)$, note that the vertex of $K_1$ is a $\forall$-canon and the problem $\{ \exists, \forall, \wedge, \vee \}$-$\MC(K_1 \uplus K_2)$ is in \NP\ by Lemma~\ref{lem:canons}. For completeness, note that the problems $\{ \exists, \wedge, \vee \}$-$\MC(K_1 \uplus K_2)$ and $\{ \exists, \wedge, \vee \}$-$\MC(K_2)$ coincide (that is, $K_1 \uplus K_2$ and $K_2$ agree on all sentences of $\{ \exists, \wedge, \vee \}$-\FO\ -- see \cite{CiE2008}). The \NP-complete problem \emph{not-all-equal $3$-satisfiability} may be reduced to $\{ \exists, \wedge, \vee \}$-$\MC(K_2)=\{ \exists, \wedge, \vee \}$-$\MC(K_1 \uplus K_2)$, as in the second part of Lemma~\ref{lem:K2andComp}, so \NP-hardness of the superproblem $\{ \exists, \forall, \wedge, \vee \}$-$\MC(K_1 \uplus K_2)$ of $\{ \exists, \wedge, \vee \}$-$\MC(K_1 \uplus K_2)$ immediately follows.

$(vi)$ and $(vii)$. \Pspace-completeness of $\{ \exists, \forall, \wedge, \vee \}$-$\MC(K^1_1 \uplus K^{11}_2)$ and $\{ \exists, \forall, \wedge, \vee \}$-$\MC(P^{000}_3)$ now follows from Lemma~\ref{lem:equfree} since $\overline{K}_2=K^{1}_1 \uplus K^{1}_1$ and $K^1_1 \uplus K^{11}_2$ (respectively, $P^{000}_3$ and $K_2$) agree on all sentences of $\{\neg, \exists, \forall, \wedge, \vee \}$-\FO.
\end{proof}
The two digraphs $H$ of size $1$ clearly give rise to $\{ \exists, \forall, \wedge, \vee \}$-$\MC(H)$ in \Logspace. The classification for digraphs $H$ of size $2$ may be read from that for boolean structures (it is also explicitly in \cite{CiE2008}) as a dichotomy between those $\{ \exists, \forall, \wedge, \vee \}$-$\MC(H)$ that are in \Logspace, and those that are \Pspace-complete. We are now in a position to work through the main result of this section.
\begin{theorem}[Tetrachotomy]
\label{thm:tetrachotomy}
Let $H$ be a digraph of size $3$. Then $\{ \exists, \forall, \wedge, \vee \}$-$\MC(H)$ is either in \Logspace, is \NP-complete, is \coNP-complete or is \Pspace-complete.
\end{theorem}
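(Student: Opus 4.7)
The plan is to classify every digraph $H$ on three vertices (finitely many up to isomorphism) by matching an upper bound and a lower bound among the four target complexities. A first step that nearly halves the casework: by Lemma \ref{lem:duality}, the problems for $H$ and $\overline{H}$ have complementary complexities, so pairs $\{H,\overline{H}\}$ may be handled together, swapping \NP\ and \coNP\ where necessary. Beyond duality, the toolkit consists of Lemma \ref{lem:symtransetc} (closures), Lemma \ref{lem:equfree} (equality-indistinguishable vertices), and Lemma \ref{lem:canons} (canons), together with the seed hardness results of Proposition \ref{prop:basicresults}.

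For the \emph{upper bounds}, I would sweep through each $H$ and locate its $\forall$-canons and $\exists$-canons. By Lemma \ref{lem:canons}, the presence of both yields \Logspace, the presence of just one yields \NP\ or \coNP\ respectively, and otherwise \Pspace\ is automatic from the preliminaries. This mechanical step produces a candidate upper bound for every three-vertex digraph; the substance lies in matching it with a lower bound.

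For \emph{hardness} I would deploy three reductive devices. First, Lemma \ref{lem:symtransetc} transfers \Pspace-hardness from $K_2$, $\overline{K}_2$, $K_3$, $\overline{K}_3$, $P^{000}_3$, or $K^1_1 \uplus K^{11}_2$ whenever any of these arises as $\mbox{sym-clos}(H)$, $\mbox{tran-clos}(H)$, or $\mbox{doub}(H)$. Second, Lemma \ref{lem:equfree} permits the collapse of any two vertices that have identical in- and out-neighbourhoods, typically shrinking $H$ to a size-two digraph whose complexity has already been settled by the boolean classification. Third, in the remaining cases, I would construct by hand, for the specific $H$, a $\{\exists,\forall,\wedge,\vee\}$-definition (with disjunction, in the style of the lemmas of the boolean section) of $K_2$ or $\overline{K}_2$ on two designated variables, using suitable conjunctions and disjunctions of renamings of $E(\cdot,\cdot)$ with auxiliary existential variables. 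For the \NP- and \coNP-complete cases, the analogous task is to carve out a copy of $K_1 \uplus K_2$ or its complement, yielding matching \NP-hardness or \coNP-hardness via Proposition \ref{prop:basicresults}$(v)$ and Lemma \ref{lem:duality}.

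The main obstacle is the residual collection of three-vertex digraphs for which none of the three reductive devices delivers a matching bound on inspection -- neither a canon giving a tight upper bound nor an obvious closure, vertex collapse, or pp-with-disjunction definition giving a tight lower bound. For each such $H$ a bespoke gadget must be designed, carefully interleaving auxiliary existential variables with $\wedge$ and $\vee$ so as to distill a copy of $K_2$, $\overline{K}_2$, or $K_1 \uplus K_2$ on two designated variables. Because the classification criterion is not a clean algebraic invariant in the three-vertex case, each such gadget has to be invented ad hoc, and confirming that every three-vertex $H$ has been placed in exactly one of the four classes -- and that the assignment is tight -- is the genuine labour behind the tetrachotomy.
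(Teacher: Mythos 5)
Your overall architecture matches the paper's: exhaustive case analysis over the (finitely many) three-vertex digraphs, halved by the duality of Lemma~\ref{lem:duality}, with upper bounds from canons (Lemma~\ref{lem:canons}) and lower bounds transferred from the seed problems of Proposition~\ref{prop:basicresults} via closure operators (Lemma~\ref{lem:symtransetc}, including compositions such as $\tranclos(\symclos(H))$ and $\doub(\tranclos(H))$), vertex identification (Lemma~\ref{lem:equfree}), and ad hoc positive equality-free definitions of $K_2$, $\overline{K}_2$ or $K_1 \uplus K_2$ with auxiliary quantified variables. That is exactly the paper's toolkit and decomposition.

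However, there is a genuine gap in your treatment of the upper bounds. You describe that step as ``mechanical'': locate the canons, and read off \Logspace, \NP, \coNP\ or \Pspace\ accordingly, so that all the bespoke work lies on the hardness side. The paper explicitly warns that the presence of both a $\forall$-canon and a $\exists$-canon is \emph{sufficient but not necessary} for tractability, and several cases realise this. For $H_4$ and $H'_4$ (subdigraphs of $P^{110}_3$), only $c$ is a $\forall$-canon and $b$ is \emph{not} a $\exists$-canon, so your sweep would report an \NP\ upper bound; the true complexity is \Logspace, established by a separate argument that after instantiating universals at $c$ the existentials may be pushed to $b$ because the third vertex $a$ has no adjacency to $c$. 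Worse, in $\overline{H_7}$ (an orientation of a complemented $K^{110}_3$) \emph{no} vertex is a $\forall$-canon or a $\exists$-canon, so your sweep yields only the trivial \Pspace\ bound, yet the problem is in \Logspace: the paper proves the tailored equivalence $\overline{H_7} \models \varphi$ iff $\overline{H_7} \models \varphi_{[\forall/b,\exists/c]}$ by a two-directional argument exploiting domination relations specific to that digraph. Under your plan these digraphs would be stranded -- you would search for \NP- or \Pspace-hardness gadgets that cannot exist -- so the proposal needs an additional ingredient: bespoke tractability arguments (quantifier-instantiation lemmas proved directly for the specific digraph) in the cases where the canon criterion is not tight. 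Once that is added, the remainder of your plan does carry through as in the paper.
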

\noindent We will prove this theorem through exhaustive consideration of a variety of cases. We may refer back to the known cases of Proposition~\ref{prop:basicresults} without citation.

\subsection{Digraphs that are either non-connected, antireflexive or reflexive}

\paragraph{Non-connected digraphs.}
Let $H$ be a non-connected digraph. We consider two cases. 

$H$ contains an isolated vertex $x$. Since $x$ is a $\forall$-canon, we know $\{ \exists, \forall, \wedge, \vee \}$-$\MC(H)$ is in \NP\ (by Lemma~\ref{lem:canons}). It is complete if the other component is non-empty and antireflexive, since then $\symclos(H)$ is $K_1 \uplus K_2$ (see Lemma~\ref{lem:symtransetc}). If the other component is empty, then $H:=K_1 \uplus K_1 \uplus K_1$ and $\{ \exists, \forall, \wedge, \vee \}$-$\MC(H)$ is in \Logspace\ (there are no yes-instances). Suppose now that the other component contains a self-loop at $y$. Since $x$ is an isolated vertex and a $\forall$-canon, we know that $H \models \varphi$ iff $H \models \varphi_{[\forall/x]}$ (where $\varphi_{[\forall/x]}$ is $\varphi$ with the universal variables evaluated to $x$), but now it is clear from the self-loop at $y$ that $H \models \varphi_{[\forall/x]}$ iff $H \models \varphi_{[\forall/x\exists/y]}$ (where $\varphi_{[\forall/x,\exists/y]}$ is $\varphi_{[\forall/x]}$ with the remaining (existential) variables evaluated to $y$). It follows that $\{ \exists, \forall, \wedge, \vee \}$-$\MC(H)$ is also in \Logspace\ in this case.

$H$ contains no isolated vertex. In this case $\tranclos(\symclos(H))$ is either $\overline{K}_3$ or $K^{1}_1 \uplus K^{11}_2$ and $\{ \exists, \forall, \wedge, \vee \}$-$\MC(H)$ is \Pspace-complete, by Lemma~\ref{lem:symtransetc}. 

\paragraph{Connected antireflexive digraphs.}
If $H$ is antireflexive and connected, then $\symclos(H)$ is either $P^{000}_3$ or $K_3$, and $\{ \exists, \forall, \wedge, \vee \}$-$\MC(H)$ is \Pspace-complete by Lemma~\ref{lem:symtransetc}.

\paragraph{Reflexive digraphs.}
Reflexive digraphs' complements are antireflexive, and may be classified, through Lemma~\ref{lem:duality}, according to the previous two paragraphs.

\subsection{Connected digraphs with one or two self-loops that are subdigraphs of $P^{111}_3$}
\label{sec:2.1}

\paragraph{One self-loop at end.} 
All digraphs $H$ in this category are s.t. $\symclos(H)$ is the digraph $P^{100}_3$, where $\{ \exists, \forall, \wedge, \vee \}$-$\MC(\overline{P^{100}_3})$ is \Pspace-complete (since $\tranclos(\overline{P^{100}_3})$ is $K^{11}_2 \uplus K^{1}_1$). It follows from Lemmas~\ref{lem:duality} and \ref{lem:symtransetc} that $\{ \exists, \forall, \wedge, \vee \}$-$\MC(H)$ is \Pspace-complete.

\vspace{0.2cm} \hspace{0.5cm} \input{p100_3_2.pstex_t}

\paragraph{One self-loop in the middle.}
For $H:=$ $P^{010}_3$, $H_1$ or $H'_1$, drawn below, the problem $\{ \exists, \forall, \wedge, \vee \}$-$\MC(H)$ is in \Logspace. This is due to these $H$ agreeing on all sentences of $\{ \neg, \exists, \forall, \wedge, \vee \}$-$\FO$ with those respective digraphs on two vertices drawn to their right (see Lemma~\ref{lem:equfree}). The result now follows from Lemma~\ref{lem:canons} as the vertices $a$ and $b$ on the right-hand digraphs are $\forall$-canon and $\exists$-canon, respectively.

\vspace{0cm} \hspace{1cm} \input{self_loop_in_middle_1.pstex_t}

\noindent When $H$ is either of the following $H_2$ or $H'_2$, the vertices $c$ and $b$ are $\forall$-canon and $\exists$-canon, respectively. It follows from Lemma~\ref{lem:canons} that the problem $\{ \exists, \forall, \wedge, \vee \}$-$\MC(H)$ is in \Logspace\ in both cases.

\hspace{1cm} \input{self_loop_in_middle_2.pstex_t}

\noindent We have only one more digraph to consider in this paragraph: $DP^{010}_3$, drawn below with its complement.

\vspace{.2cm} \hspace{2cm} \input{self_loop_in_middle_3.pstex_t}

\noindent In $DP^{010}_3$, the vertex $b$ is a $\exists$-canon; it follows from Lemma~\ref{lem:canons} that $\{ \exists, \forall, \wedge, \vee \}$-$\MC(DP^{010}_3)$ is in  \coNP. We now show that it is complete by demonstrating the \NP-hardness of $\{ \exists, \forall, \wedge, \vee \}$-$\MC(\overline{DP^{010}_3})$. Consider the following relation on $\overline{DP^{010}_3}$,
\[
\begin{array}{l}
\forall w \ E(w,w) \vee ( E(u,w) \wedge E(w,v) ) \ \ \ \vee \\
\forall w \ E(w,w) \vee ( E(w,u) \wedge E(w,v) ),
\end{array}
\]
which defines $K_1 \uplus K_2$ (note that the ``$\forall w \ E(w,w) \vee \ldots$'' may be more easily read as ``$\forall w \ \neg E(w,w) \rightarrow \ldots$''). Since $\{ \exists, \forall, \wedge, \vee \}$-$\MC(K_1 \uplus K_2)$ is \NP-complete, the result follows from Lemma~\ref{lem:duality}.

\paragraph{Two self-loops, none in the middle.} 
All digraphs $H$ in this category are s.t. $\symclos(H)$ is the digraph $P^{101}_3$, where $\tranclos(\overline{P^{101}_3})$ is $K^{11}_2 \uplus K^{1}_1$. It follows from Lemmas~\ref{lem:duality} and \ref{lem:symtransetc} that $\{ \exists, \forall, \wedge, \vee \}$-$\MC(H)$ is \Pspace-complete.

\vspace{0cm} \hspace{.5cm} \input{p101_3_2.pstex_t}

\paragraph{Two self-loops, one in the middle.} 
Firstly, we consider the digraph $P^{110}_3$ and four of its subdigraphs. 

\vspace{.2cm} \hspace{1cm} \input{p110_3_2.pstex_t}

\noindent For $H:=$ $P^{110}_3$, $H_3$ or $H'_3$, $c$ is a $\forall$-canon and $b$ is a $\exists$-canon. It follows from Lemma~\ref{lem:canons} that $\{ \exists, \forall, \wedge, \vee \}$-$\MC(H)$ is in \Logspace. 

For $H:=$ $H_4$ or $H'_4$, we can only say that $c$ is a $\forall$-canon ($b$ is not actually a $\exists$-canon). For $\varphi$ in $\{ \exists, \forall, \wedge, \vee \}$-\FO, let $\varphi_{[\forall/c]}$ be $\varphi$ with the universal variables evaluated to $c$ and let $\varphi_{[\forall/c,\exists/b]}$ be $\varphi_{[\forall/c]}$ with the remaining (existential) variables evaluated to $b$. We know from Lemma~\ref{lem:canons} that $H_4 \ (H'_4) \models \varphi$ iff $H_4 \ (H'_4) \models \varphi_{[\forall/c]}$. It is easy to see that $H_4 \ (H'_4) \models \varphi_{[\forall/c]}$ iff $H_4 \ (H'_4) \models \varphi_{[\forall/c,\exists/b]}$, by the positivity of $\varphi$, since the third vertex $a$ has no adjacency to $c$.
It follows that $\{ \exists, \forall, \wedge, \vee \}$-$\MC(H_4)$ and $\{ \exists, \forall, \wedge, \vee \}$-$\MC(H'_4)$ are both in \Logspace.
We now turn our attention to the following twins.

\vspace{.2cm} \hspace{1cm} \input{twinh5s.pstex_t}

\noindent For $H:=$ $H_5$ or $H'_5$, we have that $\doub(\tranclos(H))$ is $K^{11}_2 \uplus K^{1}_1$; and we may deduce in both cases, from Lemma~\ref{lem:symtransetc}, that $\{ \exists, \forall, \wedge, \vee \}$-$\MC(H)$ is \Pspace-complete.

\noindent This leaves us with the twins $DP^{110}_3$ and $DP^{011}_3$, over which we may define $H_5$ and $H'_5$ by
\[ 
\begin{array}{l}
E(u,v) \ \vee \ \forall w \ E(w,w) \vee ( E(v,w) \wedge \exists w' \ E(w',w) \wedge E(w',u) ) \mbox{ and} \\
E(v,u) \ \vee \ \forall w \ E(w,w) \vee ( E(w,u) \wedge \exists w' \ E(w,w') \wedge E(v,w') ),
\end{array}
\]
respectively.
It follows that both $\{ \exists, \forall, \wedge, \vee \}$-$\MC(DP^{110}_3)$ and $\{ \exists, \forall, \wedge, \vee \}$-$\MC(DP^{011}_3)$ are \Pspace-complete.

\vspace{0.2cm} \hspace{1cm} \input{dp110_3.pstex_t}

\subsection{Connected digraphs with one or two self-loops that are not subdigraphs of $P^{111}_3$}

\paragraph{Digraphs with a double edge.}
The complements of these are either non-connected or are connected subdigraphs of $P^{111}_3$, and may be classified accordingly.

\paragraph{Orientations of $K^{100}_3$ and $K^{110}_3$.} It remains only to consider the following eight digraphs, $H_6$, $H_7$, $H'_7$, $H_8$ drawn below with their respective complements.

\hspace{.5cm} \input{orientationsofk3.pstex_t}

\noindent We define the following relation on $\overline{H_6}$.
\[ E(u,v) \ \vee \ \forall w \ E(w,w) \vee ( \exists w' E(w,w') \wedge E(w',u) \wedge \exists w'' E(w'',w) \wedge E(w'',v) ) \]
This relation defines the digraph $\overline{DP^{100}_3}$, where we know $\{ \exists, \forall, \wedge, \vee \}$-$\MC(DP^{100}_3)$ is \Pspace-complete (see first paragraph of Section~\ref{sec:2.1}). It follows, by Lemma~\ref{lem:duality}, that $\{ \exists, \forall, \wedge, \vee \}$-$\MC(\overline{H_6})$ is \Pspace-complete, and also that $\{ \exists, \forall, \wedge, \vee \}$-$\MC(H_6)$ is \Pspace-complete.

\noindent For $\overline{H_8}$, the vertex $b$ is a $\forall$-canon, so by Lemma~\ref{lem:canons} $\{ \exists, \forall, \wedge, \vee \}$-$\MC(\overline{H_8})$ is in \NP. For completeness, define the digraph $K_1 \uplus K_2$ over $\overline{H_8}$ as follows.
\[
\begin{array}{l}
\forall w \ E(w,w) \vee ( E(u,w) \wedge E(w,v) ) \ \ \ \vee \\
\forall w \ E(w,w) \vee ( E(w,u) \wedge E(v,w) )
\end{array}
\]
It follows from Lemma~\ref{lem:duality} that $\{ \exists, \forall, \wedge, \vee \}$-$\MC(H_8)$ is \coNP-complete.

\noindent Finally, we turn our attention to the cases $H_7$, $\overline{H_7}$, $H'_7$ and $\overline{H'_7}$. By proving that $\{ \exists, \forall, \wedge, \vee \}$-$\MC(\overline{H_7})$ is in \Logspace, we may deduce, via Lemma~\ref{lem:duality}, the like result for $H_7$ (the proof is the same for $\overline{H'_7}$ and $H'_7$). In $\overline{H_7}$, none of the vertices is either a $\forall$-canon or a $\exists$-canon. But, we will deduce the following, which directly implies that $\{ \exists, \forall, \wedge, \vee \}$-$\MC(\overline{H_7})$ is in \Logspace.
\begin{itemize}
\item[$(\ddagger)$] If $\varphi$ is a sentence of $\{ \exists, \forall, \wedge, \vee \}$-\FO, then $\overline{H_7} \models \varphi$ iff $\overline{H_7} \models \varphi_{[\forall/b,\exists/c]}$, where $\varphi_{[\forall/b,\exists/c]}$ is obtained from $\varphi$ by instantiating all of the universal variables as $b$ and all the existential variables as $c$.
\end{itemize}
(Proof of $(\ddagger)$.) Let $\varphi_{[\forall/b]}$ (respectively, $\varphi_{[\exists/c]}$) be obtained from $\varphi$ by instantiating all universal variables as $b$ (respectively, all existential variables as $c$).

(Forwards.) 
Assume $\overline{H_7} \models \varphi$. If all universal variables are set to $b$, then it follows that (existential) witnesses for $\overline{H_7} \models \varphi_{[\forall/b]}$ exist. Note that, in $\overline{H_7}$, for all $v$ we have both $E(b,v) \Rightarrow E(b,c)$ and $E(v,b) \Rightarrow E(c,b)$ (the latter is vacuously true). It follows by the positivity of $\varphi$ that we may assume those witnesses are all $c$.

(Backwards.) 
We claim that $\overline{H_7} \models \varphi_{[\forall/b,\exists/c]}$ implies $\overline{H_7} \models \varphi_{[\exists/c]}$ (which, a fortiori, gives $\overline{H_7} \models \varphi$). The claim is true due to the positivity of $\varphi$ since, in $\overline{H_7}$, for all $v$ we have both $E(b,c) \Rightarrow E(v,c)$ and $E(c,b) \Rightarrow E(c,v)$ (again the latter is vacuously true).

\section{Further Work}

Our ultimate goal is to extend the tetrachotomy of Theorem~\ref{thm:tetrachotomy} to all digraphs. This would give the like tetrachotomy for arbitrary finite relational structures (see \cite{FederVardi}). 

\bibliographystyle{acm}
\bibliography{Combined2}

\begin{thebibliography}{10}

\bibitem{OxfordQuantifiedConstraints}
{\sc B{\"o}rner, F., Krokhin, A., Bulatov, A., and Jeavons, P.}
\newblock Quantified constraints and surjective polymorphisms.
\newblock Tech. Rep. PRG-RR-02-11, Oxford University, 2002.

\bibitem{BulatovJACM}
{\sc Bulatov, A.~A.}
\newblock A dichotomy theorem for constraint satisfaction problems on a
  3-element set.
\newblock {\em J. ACM 53}, 1 (2006), 66--120.

\bibitem{Nadia}
{\sc Creignou, N., Khanna, S., and Sudan, M.}
\newblock {\em Complexity classifications of Boolean Constraint Satisfaction
  Problems}.
\newblock SIAM Monographs. 2001.

\bibitem{Enderton}
{\sc Enderton, H.~B.}
\newblock {\em {A Mathematical Introduction to Logic}}.
\newblock Academic Press, 1972.

\bibitem{FederVardi}
{\sc Feder, T., and Vardi, M.~Y.}
\newblock The computational structure of monotone monadic {SNP} and constraint
  satisfaction: a study through datalog and group theory.
\newblock {\em SIAM J. Comput. 28\/} (1999).

\bibitem{HellNesetril}
{\sc Hell, P., and Ne\v{s}et\v{r}il, J.}
\newblock On the complexity of {H}-coloring.
\newblock {\em J. Combin. Theory Ser. B 48\/} (1990).

\bibitem{jeavons98algebraic}
{\sc Jeavons, P.}
\newblock On the algebraic structure of combinatorial problems.
\newblock {\em Theoretical Computer Science 200}, 1--2 (1998), 185--204.

\bibitem{Ladner}
{\sc Ladner, R.~E.}
\newblock On the structure of polynomial time reducibility.
\newblock {\em J. ACM 22}, 1 (1975), 155--171.

\bibitem{LaroseLotenTardif}
{\sc Larose, B., Loten, C., and Tardif, C.}
\newblock A characterisation of first-order constraint satisfaction problems.
\newblock In {\em LICS 2006\/} (2006), IEEE Computer Society, pp.~201--210.

\bibitem{NLynch}
{\sc Lynch, N.}
\newblock Log space recognition and translation of parenthesis languages.
\newblock {\em Journal of the ACM 24\/} (1977), 583--590.

\bibitem{CiE2008}
{\sc Martin, B.}
\newblock First order model checking problems parameterized by the model.
\newblock In {\em CiE 2008, LNCS 5028\/} (2008), pp.~417--427.

\bibitem{DBLP:conf/cie/MartinM06}
{\sc Martin, B., and Madelaine, F.~R.}
\newblock Towards a trichotomy for quantified {H}-coloring.
\newblock In {\em CiE 2006, LNCS 3988\/} (2006), pp.~342--352.

\bibitem{ComputationalComplexity}
{\sc Papadimitriou, C.}
\newblock {\em Computational Complexity}.
\newblock Addison-Wesley, 1994.

\bibitem{VardiComplexity}
{\sc Vardi, M.}
\newblock Complexity of relational query languages.
\newblock In {\em 14th Symposium on Theory of Computation\/} (1982).

\end{thebibliography}

\end{document}